\documentclass[lettersize, journal]{IEEEtran}
\usepackage{setspace}
\usepackage{cite}
\usepackage{pgfplots}
\pgfplotsset{compat=1.10}
\usepgfplotslibrary{fillbetween}
\usepackage{needspace}
\usepackage{balance}
\usepackage{bm}
\usepackage{lipsum}
\usepackage{makeidx}
\usepackage{enumerate}
\usepackage{color}
\usepackage{cite}
\usepackage{amsmath,amsthm}   
\usepackage{amssymb}
\usepackage{nomencl}
\usepackage{multirow}
\usepackage{graphicx}
\usepackage{epstopdf}
\usepackage{threeparttable}
\usepackage{multicol}
\usepackage{tikz}
\usetikzlibrary{matrix} 
\usetikzlibrary{positioning}
\DeclareGraphicsExtensions{.pdf,.jpeg,.png,.jpg,.emf,.eps}
\usepackage{algpseudocode}
\usepackage{algorithmicx}
\usepackage[ruled,noline,linesnumbered]{algorithm2e}
\usepackage{listings}
\usepackage[numbered,framed]{matlab-prettifier} 

\usepackage{calc}
\usepackage{here}
\usepackage{hyperref}

\usepackage{upref}
\usepackage{comment}
\usepackage{times}
\usepackage{dsfont}
\usepackage{epic,eepic}
\usepackage{rawfonts}
\usepackage[T1]{fontenc}
\usepackage{latexsym}
\usepackage{amsfonts}

\usepackage{capitalgreekitalic}

\usepackage{xcolor}
\usepackage{tikz,pgfplots}
\usetikzlibrary{calc}
\usetikzlibrary{intersections}
\usetikzlibrary{arrows,shapes}
\usetikzlibrary{positioning}

\newtheorem{theorem}{Theorem}

\newtheorem{proposition}{Proposition}

\newtheorem{remark}{Remark}

\def\Z{{\bf Z}}
\def\J{{\bf J}}
\def\R{{\bf R}}

\def\Q{{\bf Q}}
\def\I{{\bf I}}
\def\A{{\bf A}}
\def\E{{\bf E}}
\def\U{{\bf U}}
\def\V{{\bf V}}
\def\F{{\bf F}}
\def\T{{\bf T}}
\def\G{{\bf G}}
\def\H{{\bf H}}
\def\U{{\bf U}}
\def\B{{\bf B}}

\def\X{{\bf X}}
\def\Y{{\bf Y}}
\def\x{{\bf x}}

\def\f{{\bf f}}
\def\g{{\bf g}}

\def\Thetab{\bm{\Theta}}
\def\Sigmab{\bm{\Sigma}}
\def\Lambdab{\bm{\Lambda}}

\def\mub{\bm{\mu}}

\def\Gammab{\bm{\Gamma}}

\def\tr{\operatorname{tr}}
\def\det{\operatorname{det}}
\def\diag{\operatorname{diag}}
\def\Re{\operatorname{Re}}

\newcommand{\CU}{\mathcal{U}}

\allowdisplaybreaks

\begin{document}
\title{{The manifold of unitary and symmetric matrices: characterization, Riemannian optimization and application to BD-RIS design}
\author{Ignacio~Santamaria, \IEEEmembership{ Senior Member, IEEE}, Carlos~Beltr\'an, Eduard Jorswieck, \IEEEmembership{Fellow, IEEE}, Mohammad Soleymani, \IEEEmembership{Senior Member, IEEE}, Jes{\'u}s Guti{\'e}rrez, \IEEEmembership{Member, IEEE}}
\thanks{I. Santamaria is with the Department
of Communications Engineering, Universidad de Cantabria, 39005 Santander, Spain (e-mail: i.santamaria@unican.es).}
\thanks{C. Beltr\'an is with the Departamento de Matem\'aticas, Estad{\'i}stica y Computaci\'on, Universidad de Cantabria, 39005 Santander, Spain (e-mail: carlos.beltran@unican.es).}
\thanks{Eduard Jorswieck is with Institute for Communications Technology, Technische Universit{\"a}t Braunschweig, 38106 Braunschweig,
Germany (email: e.jorswieck@tu-bs.de).}
\thanks{M. Soleymani is with the Signal and System Theory Group, Universit{\"a}t  Paderborn, 33098 Paderborn, Germany (e-mail: mohammad.soleymani@uni-paderborn.de).}
\thanks{J. Guti{\'e}rrez is with IHP - Leibniz-Institut
f{\"u}r Innovative Mikroelektronik, 15236 Frankfurt (Oder), Germany (email: teran@ihp-microelectronics.com).}
\thanks{This work is partly supported by the European Commission’s Horizon Europe, Smart Networks and Services Joint Undertaking, research and innovation program under grant agreement 101139282, 6G-SENSES project. The work of I. Santamaria was also partly supported under grant PID2022-137099NB-C43 (MADDIE) funded by MICIU/AEI /10.13039/501100011033 and FEDER, UE. }
\thanks{A preliminary version of this work was presented at the 2026 IEEE International Conference on Acoustics, Speech, and Signal Processing [28].}
}
\maketitle

\begin{abstract}
This paper proposes and analyzes Riemannian optimization algorithms on the manifold of unitary and symmetric matrices, denoted $\CU_s$, which naturally models the scattering matrices of passive and reciprocal devices such as beyond-diagonal reconfigurable intelligent surfaces (BD-RISs). Despite its relevance, the geometry of $\CU_s$ has remained largely unexplored, and existing BD-RIS optimization methods either ignore the symmetry constraint or rely on costly Takagi-based parameterizations. We first provide a rigorous geometric characterization of $\CU_s$, deriving its tangent space, a simple retraction, and closed-form expressions for geodesics. Building on these results, we develop two Riemannian manifold optimization (MO) algorithms tailored to $\CU_s$: a line-search (LS) based scheme and a phase-optimization (PO) update along geodesics. We then apply the proposed framework to BD-RIS-assisted multiple-input multiple-output (MIMO) links, addressing sum-gain maximization, rate maximization, and minimum mean-square error problems, where they outperform existing approaches. Furthermore, we show that when the number of BD-RIS elements exceeds the total number of antennas, the optimal scattering matrix is low-rank, which motivates and enables efficient low-rank variants of the proposed algorithms.
\end{abstract}

\begin{IEEEkeywords}
Riemannian optimization, tangent space, geodesic, Cayley transform, Reconfigurable intelligent surface (RIS),  beyond-diagonal RIS, multiple-input multiple-output (MIMO)
\end{IEEEkeywords}
\maketitle

\section{Introduction}
\label{sec:intro}
Riemannian optimization on matrix manifolds has become an essential tool in modern signal processing for handling non-Euclidean constraints that arise naturally in many estimation and design problems, such as orthogonality or subspace invariance \cite{AbsilBook},\cite{boumal2023intromanifolds},\cite{edelman1998geometry}, \cite[Ch. 9]{Coherence}. Key manifolds of interest include the Stiefel manifold, the Grassmann manifold, and the unitary manifold. Of particular interest in this work is the unitary manifold, denoted as $\CU$, which consists of unitary matrices that satisfy $\U\U^H=\I_n$. This manifold has been extensively studied and its geometry is well understood, allowing the development of effective Riemannian optimization algorithms with applications in communications \cite{SunTSP24}, blind source separation \cite{AbrudanTSP08},\cite{ABRUDAN20091704}, and machine learning \cite{ProjUNN}. 

However, in certain applications, the optimization matrix must be both unitary and symmetric ($\U = \U^T$), belonging to the manifold denoted in this paper as $\CU_s$. Symmetric unitary matrices have been studied in classical linear algebra and matrix completion problems \cite{MathisSIAM69}. In a classical work \cite{Dyson62}, F. Dyson showed that systems having time-reversal invariance and rotational symmetry are characterized by matrices in $\CU_s$. Symmetric unitary matrices also arise for modeling scattering matrices of reciprocal media in optics and microwaves \cite{Pozar11}. Our particular interest in unitary and symmetric matrices stems precisely from the fact that they model the scattering matrices of beyond-diagonal reconfigurable intelligent surfaces (BD-RISs). A BD-RIS is a 2D surface composed of numerous reconfigurable passive elements that can independently modify the phase of impinging waves \cite{ClerkxBDRISTut26}. It generalizes conventional diagonal-phase-shift RIS by interconnecting elements via tunable impedance networks, enabling non-diagonal scattering matrices for more flexible and powerful wave manipulation \cite{ClerckxTWC22b}, \cite{ClerckxTWC22a}. The BD-RIS scattering matrix, $\Thetab$, is a unitary (passive lossless) and symmetric (due to reciprocity) matrix \cite{ClerckxTWC22a, ClerckxTWC22b}, making it a natural candidate for optimization over $\CU_s$. Despite its relevance, $\CU_s$ has received little attention in the literature, and its main elements and properties such as the tangent space and geodesics remain unexplored. Previous approaches for BD-RIS optimization have circumvented the lack of characterization of $\CU_s$ by removing the symmetry constraint and optimizing over $\CU$, as in \cite{EmilEuCap2025, Zhao25, MarwaArxiv26}; or by relaxing the problem and projecting the relaxed solution onto the set $\CU_s$, as in \cite{MaoCL2024}. These alternatives are clearly suboptimal and provide only approximate solutions. Another approach, considered in \cite{SantamariaSPAWC24, Xia25, SantamariaOJVT25}, exploits the so-called Takagi factorization \cite{Takagi, Autonne}, which states that any unitary and symmetric matrix $\Thetab \in \CU_s$ can be decomposed as $\Thetab = \Q\Q^T$, where $\Q \in \CU$. However, applying Takagi’s decomposition and performing the optimization directly on $\CU$ unnecessarily increases the complexity of the optimization problem (for example, a quadratic cost function in $\Thetab$ becomes quartic in $\Q$, and, moreover, as we will see, the dimension of the search space $\CU$ is much higher than that of $\CU_s$), thereby increasing the number of possible local minima and hindering convergence. In \cite{fidanovski2026}, the BD-RIS cost function is augmented with a penalty term that enforces symmetry, but the optimization is still performed in $\CU$. Finally, we note that a unitary and diagonal $\Thetab$, which characterizes a diagonal RIS, is a special case of a matrix in $\CU_s$. However, the manifold of unitary and diagonal matrices is the well-known complex circle manifold, whose application to RIS design problems has been extensively studied \cite{Mossallamy21, Shtaiwi23}. For a recent comprehensive review of the complex circle manifold and its applications to the design of diagonal RIS, see \cite{Junior24}.

The main goal of this paper is to rigorously study the geometry of $\CU_s$ and leverage these results to design manifold optimization (MO) algorithms that operate directly on this manifold, as well as to present applications to BD-RIS optimization in multiple-input multiple-output (MIMO) links. A preliminary version of this paper has appeared in \cite{santamariaICASSP26}, where the tangent plane and the geodesics of $\CU_s$ are characterized for the first time. This work extends \cite{santamariaICASSP26}, presenting more complete and rigorous proofs of the mathematical results, developing two variants of Riemannian manifold optimization (MO) algorithms in $\CU_s$, revisiting the Cayley transform \cite{Postikov}, \cite{Hassibi02} and its restriction to $\CU_s$, and providing a more comprehensive study of applications to BD-RIS design problems. The main contributions of this work are:
\begin{itemize}
\item We present a study of the manifold $\CU_s$, characterizing its tangent space, a simple retraction mapping, and its geodesics. Additionally, we revisit the Cayley transform showing that, while it maps skew--Hermitian matrices to matrices in $\CU$, it also maps real and symmetric matrices, times the imaginary unit, to matrices in $\CU_s$, which defines a restriction of the Cayley transform to have codomain $\CU_s$. As a byproduct of this analysis, an alternative factorization to Takagi’s emerges for matrices in $\CU_s$.

\item We develop and compare two Riemannian optimization algorithms that exploit the parameterization of geodesics in $\CU_s$. The first one applies a standard gradient descent (or ascent) algorithm whose adaptation step is obtained using a line search (LS) procedure. The second, which we will name PO, does not require setting an adaptation step and optimizes the phases of the geodesic parameterization one by one.

\item As an application of the proposed Riemannian optimization framework, we consider three problems in BD-RIS-assisted MIMO systems: i) the maximization of the sum channel gain, ii) the maximization of the achievable rate, and iii) the minimization of the mean square error (MSE). For the three cost functions, we develop LS and PO MO algorithm versions and compare their performance with alternative approaches proposed in the literature.

\item When the number of BD-RIS elements is greater than the number of transmit and receive antennas in the MIMO link, that is, $M > N_t+N_r$, we show that the optimal BD-RIS for any optimization problem is precisely of rank at most $r =N_t+N_r$. This suggests the possibility, which is explored in the paper, of developing low-rank versions of the MO algorithms with significant computational savings.

\end{itemize}

The remainder of the paper is organized as follows. Sec.~\ref{sec:Cus} presents the main mathematical results characterizing the tangent space, the geodesics, and the retraction to the manifold $\CU_s$. We also study in Sec. \ref{sec:Cus} the restriction of the Cayley transform to unitary and symmetric matrices. By leveraging the mathematical characterization of the tangent space of $\CU_s$, Sec. \ref{sec:MOgeneric} develops a Riemannian optimization framework in $\CU_s$, which leads to two MO algorithms that we call Line Search (LS) and Phase Optimization (PO). Sec. \ref{sec:applic} applies the proposed MO algorithms to optimize BD-RIS-aided MIMO systems. It considers three common metrics in MIMO channels: the sum channel gain (squared Frobenius norm), achievable rate, and mean squared error (MSE). An analysis of the equivalent MIMO channel reveals a low-rank structure of the BD-RIS that can be exploited in the MO algorithms. Finally, Sec. \ref{sec:conclusion} concludes this paper, also outlining some future areas of work.

\indent \textit{Notation}: The symbols for scalars, vectors, matrices, and sets are, respectively, $x$, $\x$, $\X$, and $\mathcal{X}$. $\A^T$, $\A^*$, $\A^H$, $\A^{-1}$, $\A^{1/2}$, $\det(\A)$ are, respectively, transpose, conjugate, Hermitian, inverse, square root, and determinant of matrix $\A$. $\mathrm{Real}(a)$, $\mathrm{Imag}(a)$, and $\angle a$ are, respectively, the real part, imaginary part, and the angle of the complex number $a$, and $j$ denotes the imaginary unit. $\I_n$ denotes the identity matrix of size $n$. ${\cal CN}({\bf 0}, {\bf R})$ is the proper complex Gaussian distribution with zero mean and covariance matrix ${\bf R}$. The spectrum of a matrix $\A$ is the set of its eigenvalues and is denoted as $\mathrm{Spec}(\A)$, and its Frobenius norm is denoted as $\|\A \|_F$. $\CU=\{\U \in \mathbb{C}^{n\times n}: \U\U^H=\I_n\}$ denotes the unitary manifold. We also consider the manifold of unitary and symmetric matrices denoted as  $ \CU_s= \{\U \in \CU: \U=\U^T\}$. When it is necessary to specify the dimensions of the matrices, we will use $\CU(n)$ and $\CU_s(n)$.

\section{The manifold $\CU_s$}
\label{sec:Cus}
\subsection{Background: The manifold $\CU$} \label{sec:CU}
It is well known that $\CU=\{\U \in \mathbb{C}^{n\times n}: \U\U^H=\I_n\}$ is a real Riemannian manifold of dimension $n^2$ \cite{AbsilBook,boumal2023intromanifolds,edelman1998geometry}, with tangent space $T_{\U} \CU=\{\B\in \mathbb{C}^{n\times n}: \U^H\B+\B^H\U={\bf 0}\}$. Alternatively, the tangent space at $\U$ can be parametrized as $\B = \U {\bf S}$, where ${\bf S}$ is an $n \times n$ skew-Hermitian matrix.
The Riemannian exponential that maps a point in the tangent space to the manifold is defined in terms of the matrix exponential as
\[
\begin{matrix}
	\exp_{\CU,\U}:&T_{\U}\CU&\to&\CU\\
	&\B&\to&\U e^{\U^H\B},
\end{matrix}
\]
which is a diffeomorphism if restricted to tangent matrices $\B$ of sufficiently small Frobenius norm \cite{HallLieGroups}.

\subsection{Characterization of $\CU_s$: tangent space, geodesics, and retraction}
Our objective is to study the set $ \CU_s= \{\U \in \CU: \U=\U^T\}$ and then derive a Riemannian optimization algorithm. Firstly, we recall Takagi's decomposition of a complex and symmetric matrix, which plays an important role in the characterization of $\CU_s$. 
\begin{theorem}[Takagi factorization \cite{Takagi}]
Let $\A = \A^T$ be an $n \times n$ complex symmetric matrix. Then, there exist an $n \times n$ unitary matrix $\Q \in \CU$ and an $n\times n$ diagonal matrix $\bm{ \Sigma} = \diag(\sigma_1,\ldots,\sigma_n)$ with $\sigma_1 \geq \sigma_2 \geq \ldots \geq \sigma_n \geq 0$ such that $\A = \Q \bm{ \Sigma} \Q^T$.
\end{theorem}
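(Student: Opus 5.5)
The plan is to reduce the statement to the spectral theorem for Hermitian matrices, applied to the positive semidefinite matrix $\A\A^H = \A\A^*$ (using $\A^H=\A^*$ by symmetry). Write $\A\A^* = \V \bm{\Lambda} \V^H$ with $\V\in\CU$ and $\bm{\Lambda}=\diag(\lambda_1,\ldots,\lambda_n)$, $\lambda_1\ge\cdots\ge\lambda_n\ge 0$. Set $\sigma_i=\sqrt{\lambda_i}$; these are the singular values of $\A$ and will be the diagonal of $\bm{\Sigma}$.

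Next I pass to the symmetric matrix $\B = \V^H \A \V^*$. By construction $\B^T=\B$, and
\[
\B\B^H=\V^H\A\A^H\V=\bm{\Lambda}.
\]
Hence $\B\B^*=\bm{\Lambda}$ is real diagonal. Split $\B=\X+j\Y$ with $\X,\Y$ real symmetric. Then
\[
\B\B^* = (\X^2+\Y^2) + j(\Y\X-\X\Y).
\]
Since this is real, $\X$ and $\Y$ commute. Two commuting real symmetric matrices are simultaneously diagonalizable by a real orthogonal $\mathbf{O}$, giving $\B=\mathbf{O}\,\mathbf{D}\,\mathbf{O}^T$ with $\mathbf{D}$ complex diagonal, $d_i=\sigma_i e^{j\phi_i}$ up to ordering.

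Now absorb the phases into the unitary factor. With $\mathbf{P}=\diag(e^{j\phi_i/2})$ we get $\B=(\mathbf{O}\mathbf{P})\,|\mathbf{D}|\,(\mathbf{O}\mathbf{P})^T$. A permutation $\bm{\Pi}$ (real orthogonal, so it passes through transposes) reorders $|\mathbf{D}|$ decreasingly. Finally
\[
\A=\V\B\V^T=\Q\bm{\Sigma}\Q^T,\qquad \Q=\V\mathbf{O}\mathbf{P}\bm{\Pi}\in\CU.
\]
The moduli $|d_i|$ are indeed the $\sigma_i$, because $|\mathbf{D}|^2$ is similar to $\B\B^*=\bm{\Lambda}$.

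The main obstacle is the simultaneous real diagonalization step: showing that $\B\B^*$ real forces $[\X,\Y]=0$, and then invoking the commuting-symmetric-matrices theorem. For that theorem I would argue directly. Diagonalize $\X$ orthogonally; $\Y$ then preserves each eigenspace of $\X$, and restricted to each block $\Y$ is real symmetric, so it can be orthogonally diagonalized within the block. This avoids any repeated-eigenvalue subtleties that would arise if one instead tried to diagonalize $\A\A^*$ and patch eigenvectors directly.
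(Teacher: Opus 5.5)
Your proof is correct and complete, and it differs from what the paper does, because the paper does not prove this theorem at all. It cites Takagi and only indicates an SVD-based recipe, $\Q=\F(\F^H\G^*)^{1/2}$ from $\A=\F\Sigmab\G^H$, noting that this is simplest when the singular values are distinct (then $\F^H\G^*$ is diagonal).

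Your argument is closest to the second proof in Remark~\ref{remark:decomp}. There, for $\U\in\CU_s$, the real and imaginary parts are shown to commute, are simultaneously diagonalized by a real orthogonal matrix, and the phases are then halved to build $\Q$. In that special case $\U\U^*=\I_n$ is already real, so no preparation is needed. Your extra ingredient is the symmetry-preserving unitary congruence $\B=\V^H\A\V^*$, which makes $\B\B^*=\Lambdab$ real and hence forces $[\X,\Y]=0$ for an arbitrary symmetric $\A$.

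What your route buys is a self-contained proof that treats repeated and zero singular values uniformly, through the block-wise diagonalization of $\Y$ on the eigenspaces of $\X$. It needs no square roots of unitary matrices and no branch choices, and it identifies the $\sigma_i$ as the singular values of $\A$ for free. The SVD recipe has the advantage of reusing a standard numerical SVD. However, with repeated singular values one must additionally show that $\F^H\G^*$ is block diagonal with symmetric unitary blocks on the nonzero singular values, and then choose a symmetric unitary square root of it; your route avoids this step.
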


From the singular value decomposition (SVD) of a symmetric matrix $\A = \F \Sigmab \G^H$, its Takagi decomposition can be obtained as $\A = \Q \Sigmab \Q^T$ with $\Q = \F (\F^H\G^*)^{1/2}$. Note that if all singular values of $\A$ are distinct,  $\F^H\G^*$ is a diagonal matrix and the factor $\Q$ is obtained as described in \cite[Remark 2]{SantamariaSPLetters2023}. We have the following propositions:
\begin{proposition} [{\bf Tangent space and geodesics}]
\label{prop:tangent}
	$\CU_s$ is a real Riemannian manifold of dimension $n(n+1)/2$ and its tangent space at the point $\U \in\CU_s$ is
	\begin{align*}
	T_{\U}\CU_s=&\{\B\in  \mathbb{C}^{n\times n}: \U^H\B+\B^H\U={\bf 0},\;\B=\B^T\}\\
    =& \{j\Q\R\Q^T:\R\in \mathbb{R}^{n\times n}, \R=\R^T \},
	\end{align*}
    where $\U=\Q\Q^T$ is a Takagi decomposition of $\U$. Therefore, the tangent space at $\U =  \Q\Q^T$ is composed of all matrices $\B = j \Q\R\Q^T$ with $\R$ an  $n\times n$ real and symmetric matrix. In addition, the Riemannian exponential is the restriction of $\exp_{\CU,\U}$ to $T_{\U}\CU_s$. Therefore, the geodesic in $\CU_s$ starting at $\U =\Q\Q^T$ and with direction $\B = j\Q\R\Q^T \in  T_{\U}\CU_s$ is
    \begin{equation*}
        \U(\mu) = \U e^{\U^H\B \mu} = \U e^{j \Q^* \R \Q^T \mu},
    \end{equation*}
    with $\R$ real and symmetric, and $\mu \geq 0$ a step size that parameterizes the geodesic.
\end{proposition}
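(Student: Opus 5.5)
The plan is to show first that $\CU_s$ is an embedded submanifold of $\CU$, then identify its tangent space, and finally check that $\CU$-geodesics with tangent initial velocity stay in $\CU_s$.

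\textbf{Manifold structure and dimension.} Consider the smooth map $\Phi:\mathbb{R}^{n\times n}_{\mathrm{sym}}\to\CU_s$, $\Phi(\R)=e^{j\R}$. It is well defined because $(e^{j\R})^T=e^{j\R^T}=e^{j\R}$ and $(e^{j\R})^H=e^{-j\R}$. By Takagi's theorem applied to $\U\in\CU_s$, we can write $\U=\Q\Q^T$; here $\Sigmab=\I_n$ since all singular values of a unitary matrix equal $1$. The left-translation-type map $\Psi_\Q(\V)=\Q\V\Q^T$ sends $\CU_s$ to itself: it preserves symmetry, and it preserves unitarity because $\Q\V\Q^T(\Q\V\Q^T)^H=\Q\V\Q^T\Q^*\V^H\Q^H=\I_n$. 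It also maps $\I_n$ to $\U$. Hence a neighborhood of $\U$ in $\CU_s$ is parametrized by $\R\mapsto\Q e^{j\R}\Q^T$ with $\R$ small. It remains to show this is a homeomorphism onto a neighborhood of $\U$ in $\CU_s$, which gives an embedded submanifold chart of dimension $n(n+1)/2$. Injectivity near $\R=0$ follows from the local injectivity of $\exp$ on $\CU$. For surjectivity onto a neighborhood, take $\V=\Q^H\U'\Q^*$ with $\U'\in\CU_s$ close to $\U$. Then $\V\in\CU_s$ is close to $\I_n$, so its principal logarithm $\log\V$ is skew-Hermitian and symmetric, being a power series in $\V-\I_n$. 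A skew-Hermitian symmetric matrix has the form $j\R$ with $\R$ real symmetric. This surjectivity is the step I expect to need the most care.

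\textbf{Tangent space.} Differentiating $\U(t)\in\CU_s$ gives the inclusion of $T_\U\CU_s$ in $\{\B:\U^H\B+\B^H\U=0,\ \B=\B^T\}$. The chart shows $T_\U\CU_s=\{j\Q\R\Q^T\}$, which has real dimension $n(n+1)/2$. Conversely, take $\B$ in the constraint set and put $\mathbf{C}=\Q^H\B\Q^*$. Then $\mathbf{C}$ is symmetric, and since $\U^H=\Q^*\Q^H$, the relation $\U^H\B=\Q^*\mathbf{C}\Q^T$ shows that $\mathbf{C}$ is skew-Hermitian. A symmetric skew-Hermitian matrix is $j\R$ with $\R$ real symmetric. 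This gives all three descriptions at once.

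\textbf{Geodesics.} Endow $\CU_s$ with the metric induced from $\CU$, namely $\mathrm{Real}\,\tr(\A^H\B)$. If the $\CU$-geodesic $\U e^{\U^H\B\mu}$ lies in $\CU_s$ for every tangent $\B$, it is a curve in $\CU_s$ whose acceleration is normal to $\CU$, hence normal to $\CU_s$, so it is a $\CU_s$-geodesic. With $\B=j\Q\R\Q^T$ we get $\U^H\B=j\Q^*\R\Q^T$, and since $\Q^T\Q^*=\I_n$,
\begin{equation*}
\U e^{j\Q^*\R\Q^T\mu}=\Q\Q^T\Q^* e^{j\R\mu}\Q^T=\Q e^{j\R\mu}\Q^T\in\CU_s .
\end{equation*}
The last matrix is symmetric and unitary by the first step. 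This proves the formula and shows that the Riemannian exponential is the restriction of $\exp_{\CU,\U}$.
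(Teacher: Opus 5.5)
Your proof is correct and follows the paper's overall structure. Your chart $\R\mapsto\Q e^{j\R}\Q^T$ is exactly the paper's restricted exponential $\exp_{\CU,\U}(j\Q\R\Q^T)$. Your reduction via $\mathbf{C}=\Q^H\B\Q^*$ is the paper's computation with $\R=-j\Q^H\B\Q^*$.

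The two arguments differ in two places.

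First, consider the step showing that the restricted chart covers a neighborhood of $\U$ in $\CU_s$. The paper stays at $\U$: it writes $\hat\U=\U e^{\U^H\B}$ with $\B$ only known to be tangent to $\CU$. It then uses $\hat\U=\hat\U^T$ to obtain $e^{\B\U^H}=e^{-\U\B^*}$, and applies injectivity of the exponential on a small ball to force $\B=\B^T$. You instead use the congruence action $\V\mapsto\Q\V\Q^T$ of $\CU$ on $\CU_s$ to move the problem to $\I_n$. There you read off the symmetry of the principal logarithm from its power series. This is somewhat cleaner and makes the homogeneous structure of $\CU_s$ visible. It also gives you an explicit smooth inverse chart $\U'\mapsto -j\log(\Q^H\U'\Q^*)$, which settles the immersion and homeomorphism requirements you left implicit. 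One small correction: the skew-Hermitian property of $\log\V$ comes from the unitarity of $\V$, via $(\log\V)^H=\log(\V^{-1})=-\log\V$ near $\I_n$, not from the power series alone. This is standard.

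Second, the paper only asserts that the Riemannian exponential of $\CU_s$ is the restriction of $\exp_{\CU,\U}$, once it knows this restriction lands in $\CU_s$. Your observation supplies the missing justification. A $\CU$-geodesic lying in $\CU_s$ has ambient acceleration normal to $T\CU\supseteq T\CU_s$, so it is a geodesic of $\CU_s$ for the induced metric. In other words, $\CU_s$ is totally geodesic in $\CU$.
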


\begin{proof}
    See Appendix.
\end{proof}

The above result can be rewritten as follows. The matrix $j\R$ (purely imaginary skew-Hermitian), admits the following eigendecomposition $j\R = \V_R \Sigmab_R \V_R^T$, where $\V_R$ is a (real) orthogonal matrix and $\Sigmab_R = \diag(j\theta_1,\ldots, j\theta_n)$ is a diagonal matrix with imaginary eigenvalues. Applying standard results of the exponentiation of diagonalizable matrices, we see that the geodesic starting at $\U = \Q\Q^T$ and pointing in the direction $j\Q\R\Q^T$ can be written as
\begin{eqnarray}
     \U(\mu) =  \U e^{j \Q^* \R \Q^T \mu} &=& \Q\Q^T \Q^* \V_R \Lambdab_R^\mu \V_R^T \Q^T  \nonumber \\
     &\stackrel{(a)}{=}& \Q \V_R \Lambdab_R^\mu \V_R^T \Q^T \nonumber \\
     & \stackrel{(b)}{=}& \Q_R  \Lambdab_R^\mu \Q_R^T, \label{eq:geodesic2}
\end{eqnarray}
where in $(a)$ we have used $\Q^T \Q^*  = \I_n$, in $(b)$ the unitary matrix $\Q \V_R$ is denoted as $\Q_R$, and $\Lambdab_R^\mu$ is the following diagonal matrix   
\begin{equation}
\Lambdab_R^\mu = \exp(\mu\Sigmab_R) = \diag \left(e^{j\theta_1 \mu},\ldots, e^{j\theta_n \mu} \right).
\label{eq:RISmatrix}
\end{equation}
In words, geodesics in $\CU_s$ are traversed by multiplying the phases of a diagonal matrix, $\Lambdab_R = \exp(\Sigmab_R)$, by a step size $\mu$. The direction of the geodesic, which depends on the cost function to be optimized, is determined by the eigenvectors and eigenvalues of $\R$. 

\begin{proposition} [{\bf Retraction to} $\CU_s$ ] \label{prop:retraction} The mapping 
\[
	\begin{matrix}
		\Pi:&\{ \A\in \mathbb{C}^{n\times n}: \A=\A^T\}&\to&\CU_s\\
		&\A&\to& \Q\Q^T,
	\end{matrix}
\]
where $\A=\Q \Sigmab \Q^T$ is a Takagi factorization of $\A$, sends $\A$ to the unitary and symmetric matrix $ \U = \Q\Q^T \in \CU_s$ closest to $\A$ in Frobenius norm (to one of them, if there are several).
\end{proposition}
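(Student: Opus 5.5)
The plan is to reduce the nearest-point problem over $\CU_s$ to the classical nearest-unitary problem over $\CU$, and then check that the minimizer obtained there lies in $\CU_s$. Since $\CU_s\subset\CU$,
\[
\min_{\U\in\CU_s}\|\A-\U\|_F \;\geq\; \min_{\U\in\CU}\|\A-\U\|_F .
\]
Hence it suffices to exhibit a minimizer over $\CU$ that is unitary and symmetric. By construction, $\Q\Q^T$ is such a candidate.

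First I solve the unconstrained (unitary) problem. For $\U\in\CU$,
\[
\|\A-\U\|_F^2=\|\A\|_F^2+n-2\,\mathrm{Real}\,\tr(\U^H\A).
\]
So minimizing the distance is equivalent to maximizing $\mathrm{Real}\,\tr(\U^H\A)$. Insert the Takagi factorization $\A=\Q\Sigmab\Q^T$ from Theorem 1 and set $\mathbf{W}=\Q^H\U\Q^*$, which is unitary. Then
\[
\tr(\U^H\A)=\tr\bigl(\mathbf{W}^H\Sigmab\bigr)=\sum_i \sigma_i\,\overline{W_{ii}}.
\]
Every unitary matrix satisfies $|W_{ii}|\le 1$, and $\sigma_i\ge 0$. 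Therefore
\[
\mathrm{Real}\,\tr(\U^H\A)\le\sum_i\sigma_i ,
\]
with equality when $\mathbf{W}=\I_n$, that is, when $\U=\Q\Q^T$. This is the von Neumann trace-inequality argument, specialized to the Takagi form.

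Next I check feasibility. The matrix $\Q\Q^T$ is unitary, since $\Q\Q^T(\Q\Q^T)^H=\Q\Q^T\Q^*\Q^H=\I_n$. It is symmetric, since $(\Q\Q^T)^T=\Q\Q^T$. So it belongs to $\CU_s$ and attains the lower bound. It is therefore a closest point of $\CU_s$ to $\A$.

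The phrase ``to one of them, if there are several'' is handled by observing that the argument above applies to \emph{any} Takagi factor $\Q$. Each such factor gives a minimizer. When some $\sigma_i$ vanish or repeat, different factors may give different minimizers. The bound still holds for each, so every choice is a closest point.

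The main subtlety I expect is the equality step in the case $\sigma_i=0$. There, $W_{ii}$ is unconstrained, so $\Q\Q^T$ is merely one of possibly many maximizers. This affects only uniqueness, not optimality, so no extra work is needed beyond noting it.

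Finally, the term ``retraction'' also requires the usual local properties. The map $\Pi$ must fix points of $\CU_s$: if $\A=\U\in\CU_s$ then $\Sigmab=\I_n$ and $\Pi(\U)=\U$. Its first-order behaviour along tangent directions must also be correct. For these I would invoke the standard fact that the metric projection onto a smooth embedded submanifold is a retraction near the manifold (cf.\ \cite{AbsilBook}), restricted to the linear space of symmetric matrices. That space contains $\CU_s$ and, by Proposition~\ref{prop:tangent}, its tangent spaces.
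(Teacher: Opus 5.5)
Your proof is correct and has the same skeleton as the paper's. Both show that $\Q\Q^T$ is a nearest point to $\A$ in all of $\CU$, and conclude, since it is symmetric, that it is a nearest point in $\CU_s\subset\CU$.

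The difference is in the nearest-unitary step. The paper takes an SVD $\A=\F\Sigmab\G^H$ and forms the polar decomposition $\A=(\F\G^H)(\G\Sigmab\G^H)$. It cites Fan--Hoffman for the optimality of the polar factor. It then identifies that factor with $\Q\Q^T$, implicitly reading $\Q\Sigmab(\Q^*)^H$ as an SVD. You instead prove the Frobenius-norm case of that fact directly in Takagi coordinates, through $\tr(\U^H\A)=\tr(\mathbf{W}^H\Sigmab)$ with $\mathbf{W}=\Q^H\U\Q^*$ and $|W_{ii}|\le 1$. The paper's route is shorter and inherits the generality of the cited theorem. Yours is self-contained, avoids the SVD/polar detour, and makes it transparent that every Takagi factor gives a minimizer, including when $\A$ is singular.

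One small inaccuracy: repeated positive $\sigma_i$ do not produce distinct minimizers. If all $\sigma_i>0$, equality forces $W_{ii}=1$ for every $i$, hence $\mathbf{W}=\I_n$. So $\Q\Q^T$ is then the unique minimizer and does not depend on which Takagi factor is used; non-uniqueness arises only from zero singular values.

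Your closing paragraph on the retraction axioms is extra. The paper proves only the nearest-point property, which is all the statement asserts.
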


\begin{proof}
    Given $\A = \F  \Sigmab  \G^H$, one can also write the polar decomposition $$\A=(\F \G^H)(\G \Sigmab \G^H).$$ It is a classical fact that the unitary factor in the polar decomposition, $\Pi(\A) = \F\G^H$, is the unitary matrix closest to $\A$, see for example \cite[Theorem 1]{FanHoffman}. If $\A=\Q \Sigmab \Q^T$ is symmetric, $\Pi(\A) = \Q\Q^T$ is hence the unitary and symmetric matrix closest to $\A$ in $\CU$ and consequently in $\CU_s \subset \CU$.
\end{proof}

\begin{proposition} [{\bf Projection to the tangent space}] \label{prop:projection} Given any matrix $\J \in \mathbb{C}^{n\times n}$, the orthogonal projection of $\J$ onto $T_{\U}\CU_s$ is 
\begin{equation}
		\pi_{T_{\U}\CU_s} \J = j\,\Q \underbrace{\mathrm{Imag}\left(\frac{\Q^{H}(\J+\J^T)\Q^{*}}{2}\right)}_{\R}\Q^T,
        \label{eq:proj}
\end{equation}	
	with $\U = \Q\Q^T$ a Takagi decomposition of $\U \in \CU_s$.
\end{proposition}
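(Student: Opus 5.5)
The plan is to use the Frobenius (real) inner product $\langle \X,\Y\rangle=\mathrm{Real}\,\tr(\X^H\Y)$ on $\mathbb{C}^{n\times n}\cong\mathbb{R}^{2n^2}$ together with the parametrization of Proposition~\ref{prop:tangent}, $T_{\U}\CU_s=\{j\Q\R\Q^T:\R\in\mathbb{R}^{n\times n},\R=\R^T\}$. The main step is to change coordinates through the map $\Phi(\X)=\Q^H\X\Q^*$. Since $\Q$ is unitary, $\Phi$ is an isometry of $\mathbb{C}^{n\times n}$ with inverse $\Y\mapsto\Q\Y\Q^T$. Indeed,
\[
\tr\bigl((\Q^H\X\Q^*)^H\Q^H\Y\Q^*\bigr)=\tr(\Q^T\X^H\Y\Q^*)=\tr(\X^H\Y),
\]
using $\Q^*\Q^T=(\Q\Q^H)^*=\I_n$. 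Under $\Phi$, the tangent space becomes the real linear subspace $\mathcal{S}=\{j\R:\R\in\mathbb{R}^{n\times n},\R=\R^T\}$. Orthogonal projection commutes with isometries, so $\pi_{T_{\U}\CU_s}\J=\Q\,\pi_{\mathcal{S}}(\Phi(\J))\,\Q^T$. It therefore suffices to compute the orthogonal projection of an arbitrary $\Y=\Q^H\J\Q^*$ onto $\mathcal{S}$.

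Next, I would decompose $\mathbb{C}^{n\times n}$ orthogonally as symmetric $\oplus$ antisymmetric. Transposition is an isometry, so these two eigenspaces are orthogonal. Then I would split each part into real $\oplus$ imaginary, which are orthogonal because $\mathrm{Real}\,\tr((\A)^H j\B)=0$ for real $\A,\B$. This gives four mutually orthogonal real subspaces, and $\mathcal{S}$ is exactly the imaginary-symmetric one. The projection onto $\mathcal{S}$ is therefore: first symmetrize, $\Y\mapsto(\Y+\Y^T)/2$, then keep $j\,\mathrm{Imag}(\cdot)$. Finally, $\Phi$ commutes with transposition in the sense that $(\Q^H\J\Q^*)^T=\Q^H\J^T\Q^*$. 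Hence the symmetrized $\Y$ equals $\Q^H(\J+\J^T)\Q^*/2$, which yields exactly \eqref{eq:proj}, with $\R$ real and symmetric as required.

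As a sanity check, I would verify directly that the residual $\J-\pi(\J)$ is orthogonal to every $j\Q\R'\Q^T$. This reduces to $\mathrm{Real}\,\tr\bigl((\Y-j\R)^H j\R'\bigr)=0$. That identity follows because the symmetric part of $\Y-j\R$ is real, and the antisymmetric part pairs to zero with the symmetric $\R'$.

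The main obstacle is making the isometry claim airtight. In particular, $\Q\,\cdot\,\Q^T$ is not a similarity transformation, so one must check carefully that it preserves the Frobenius inner product (via $\Q^T\Q^*=\I_n$). One must also check that it maps the symmetric/real decomposition correctly; the realness is taken only after conjugating by $\Q$, which is why the formula involves $\mathrm{Imag}$ of the transformed matrix rather than of $\J$. I also need independence from the non-unique Takagi factor. This is automatic, because the orthogonal projection is intrinsic and the argument works for any valid $\Q$.
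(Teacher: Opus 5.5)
Your proposal is correct and follows essentially the same route as the paper. The paper also uses the unitary invariance $\|\J-j\Q\R\Q^T\|_F=\|-j\Q^H\J\Q^*-\R\|_F$ (your isometry $\Phi$) to reduce the problem to finding the real symmetric matrix closest to $-j\Q^H\J\Q^*$, which is your projection onto the imaginary-symmetric summand; your four-way orthogonal splitting makes explicit the step the paper calls trivial.
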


\begin{proof} According to Proposition \ref{prop:tangent} the tangent space at $\U = \Q\Q^T$ can be parameterized as $\B = j\Q\R\Q^T$ with $\R$ real and symmetric. Let us first solve the problem 
\begin{equation}\label{eq:JJT}
    \min_{\R \in \mathbb{R}^{n \times n}, \R = \R^T} \| {\bf J}- j\Q\R\Q^T \|_F^2, 
\end{equation}
or, equivalently,
\begin{equation*}
    \min_{\R \in \mathbb{R}^{n \times n}, \R = \R^T} \| -j\Q^H{\bf J}\Q^*-\R \|_F^2, 
\end{equation*}
that amounts to computing the real symmetric matrix $\R$ which is closest to $-j\Q^H{\bf J}\Q^*$. This is solved trivially by
\begin{align*}
\R=&\frac{\mathrm{Real}(-j\Q^H{\bf J}\Q^*)+\mathrm{Real}(-j\Q^H{\bf J}\Q^*)^T}{2}&\\
=&
\frac{\mathrm{Imag}(\Q^H{\bf J}\Q^*)+\mathrm{Imag}(\Q^H{\bf J}\Q^*)^T}{2}\\
=&\mathrm{Imag}\left(\frac{\Q^{H}(\J+\J^T)\Q^{*}}{2}\right).
\end{align*}
After inserting this value of $\R$ in the left hand side term of \eqref{eq:JJT}, we get
we get \eqref{eq:proj}.
\end{proof}

\subsection{Cayley's transform for matrices in $\CU_s$}
\label{sec:Cayley}
The Cayley transform \cite{Postikov} is a way to express a unitary matrix by means of a nonlinear transformation of an $n \times n$ skew-Hermitian matrix $j \B$, with $\B$ Hermitian\footnote{Note that $\B$ is Hermitian, if and only if $j\B$ is skew-Hermitian.}. It is given by
\begin{equation}
    \U = (\I_n + j\B)^{-1}(\I_n - j\B),
    \label{eq:B2Theta}
\end{equation}
where the matrix $(\I_n + j\B)$ is invertible because all the eigenvalues of the skew-Hermitian matrix $j\B$ have null real part, so they must be different from $-1$. Not all unitary matrices can be described this way: only those $\U$ with the property that all their eigenvalues are different from $-1$ admit a factorization of the form \eqref{eq:B2Theta}. This transform, which was discovered by Arthur Cayley in 1846 \cite{Cayley46}, has been extensively studied in several signal processing problems, ranging from image processing \cite{Kovacevic06} to codebook design in non-coherent communications \cite{Hassibi02}. In BD-RIS design problems, the scattering matrix and the admittance matrix are related by the Cayley transform \cite{ClerkxBDRISTut26},\cite{Nerini24}.

We are motivated by the following question: Is it possible to restrict the Cayley transform to reach only unitary and symmetric matrices? The following proposition answers the question, yielding a parametrization of (an open full measure subset of) unitary and symmetric matrices. Further, it also leads to a factorization of matrices in $\CU_s$ alternative to Takagi's method. 

Let us denote the set of $n \times n$ real symmetric matrices as ${\cal{B}} = \{\B:\B\in\mathbb R^{n\times n},\;\B=\B^T\}$, so that $j{\cal{B}}=T_{\I_n}\CU_s$, and the set of matrices in $\CU_s$ with all eigenvalues different from $-1$ as $\CU_{s}' = \{\U\in\CU_s:-1\not\in\mathrm{Spec}(\U)\}$.
\begin{proposition}[{\bf Restriction of the Cayley transform to matrices in $\CU_s$}] 
\label{prop:Cayley}  The restriction of the Cayley transform to the set of $j\B$ where $\B \in {\cal{B}} $ defines a diffeomorphism to $\CU_{s}'$: 
\begin{eqnarray*}
         \varphi:  {j\cal{B}} &\to&
         \CU_{s}' \\
    j\B&\mapsto& (\I_n + j\B)^{-1}(\I_n - j\B).  
\end{eqnarray*} 
The same formula gives its inverse,
\begin{eqnarray*}
         \varphi^{-1}:  \CU_{s}' &\to & j{\cal{B}}\\
    \U &\mapsto&  (\I_n + j\U)^{-1}(\I_n - j\U).
\end{eqnarray*} 
The mapping $\varphi$ hence sends the tangent space to $\CU_s$ at $\I_n$ to $\CU_s$. It differs from the matrix exponential: indeed, the derivatives at the origin are not equal. To show this, let $\varphi(tj\dot \B)$ be a smooth curve defined by the Cayley transform on the manifold ${\cal{B}}$, where we are denoting by $\dot\B$ any tangent vector to the space of real symmetric matrices (that is itself a real symmetric matrix)\footnote{You can think of $-2j\dot\B$ as a velocity (the time derivative in classical mechanics) for the curve $\varphi(tj\dot \B)$. At $t=0$, the point on the manifold is $\varphi(0) = \I_n$, and the derivative or velocity (on the tangent plane) is $\dot{\varphi}(0) = -2j\dot\B$. For example, take the curve $x(t)=e^{jt}$ on the complex circle $\lvert x \rvert=1$. The curve starts at $x(0)=1$ and the velocity is  $\dot{x}\mid_{t=0}=j e^{jt}\mid_{t=0}=j$ (the tangent vector).}. Note that $\varphi(tj\dot \B)\mid_{t=0} = \I_n$. Then, it is easy to show that
\[
\left. \frac{d}{dt}(\varphi(tj\dot \B)) \right |_{t=0}=-2j\dot\B.
\]
However, the Riemannian exponential for the manifold $\CU$ at $\U = \I_n$  is $\exp_{\CU,\I_n}(tj\dot \B) = \I_n \exp(t j \dot \B)$ (cf. Sec. \ref{sec:CU}). Therefore, its derivative at $t=0$ is 
\[
\left. \frac{d}{dt}(\exp_{\CU,\I_n}(tj\dot \B)) \right |_{t=0}=j\dot\B.
\]
\end{proposition}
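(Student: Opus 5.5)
The plan is to verify the claims directly, using repeatedly one elementary fact: for a fixed matrix $\X$, the matrices $\I_n+\X$, $\I_n-\X$ and $(\I_n+\X)^{-1}$ (when it exists) are all rational functions of $\X$. Hence they commute with each other, and their transposes and conjugate transposes are the same rational functions of $\X^T$ and $\X^H$.

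\textbf{Step 1: $\varphi$ is well defined into $\CU_s'$.} Take $\B\in{\cal B}$ and set $\X=j\B$. Since $j\B$ is skew-Hermitian, its eigenvalues are $j\theta_i$ with $\theta_i$ real, so $\I_n+j\B$ is invertible.
\begin{itemize}
\item Unitarity: $\U=(\I_n+j\B)^{-1}(\I_n-j\B)$ has $\U^H=(\I_n+j\B)(\I_n-j\B)^{-1}$. By commutation, $\U\U^H=\I_n$.
\item Symmetry: $\B^T=\B$ gives $\U^T=(\I_n-j\B)(\I_n+j\B)^{-1}$, and this equals $\U$ because the factors commute.
\item Spectrum: the eigenvalues of $\U$ are $(1-j\theta_i)/(1+j\theta_i)$. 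Such a value equals $-1$ only if $1=-1$, which is impossible, so $-1\notin\mathrm{Spec}(\U)$.
\end{itemize}

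\textbf{Step 2: the inverse.} Given $\U\in\CU_s'$, the matrix $\I_n+\U$ is invertible, so we can set $\X=(\I_n+\U)^{-1}(\I_n-\U)$. This is "the same formula" applied to $\U$. I read the displayed $\varphi^{-1}$ in this sense, with the output equal to $j\B$ itself; as written with $j\U$ I expect it is a typo, since the standard Cayley inverse is $(\I_n+\U)^{-1}(\I_n-\U)$.
\begin{itemize}
\item Skew-Hermitian: using $\U^H=\U^{-1}$,
\[
\X^H=(\I_n-\U^{-1})(\I_n+\U^{-1})^{-1}=(\U-\I_n)(\U+\I_n)^{-1}=-\X .
\]
\item Symmetric: $\U^T=\U$ together with commutation gives $\X^T=\X$.
\item Conclusion: combining the two, $\X^*=(\X^H)^T=-\X$, so $\X$ is purely imaginary. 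Writing $\X=j\B$, the matrix $\B$ is real and symmetric.
\end{itemize}
To check that the two maps are mutually inverse, multiply out. From $(\I_n+\X)\U=\I_n-\X$ one gets $\X(\I_n+\U)=\I_n-\U$, which is exactly $\X=(\I_n+\U)^{-1}(\I_n-\U)$; the reverse direction is symmetric. Both maps are rational in the matrix entries with nonvanishing denominators on their domains, hence smooth. So $\varphi$ is a diffeomorphism onto $\CU_s'$.

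\textbf{Step 3: derivative at $t=0$.} Differentiate $(\I_n+tj\dot\B)^{-1}(\I_n-tj\dot\B)$ with the product rule, using $\frac{d}{dt}(\I_n+tj\dot\B)^{-1}\big|_{t=0}=-j\dot\B$. This gives $-j\dot\B-j\dot\B=-2j\dot\B$. In contrast, $\frac{d}{dt}e^{tj\dot\B}\big|_{t=0}=j\dot\B$. The two derivatives differ whenever $\dot\B\neq{\bf 0}$, which shows $\varphi$ is not the matrix exponential.

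The only delicate part is the bookkeeping of commutation and transposition in Step 2, together with the apparent typo in the stated inverse formula. Everything else is routine algebra.
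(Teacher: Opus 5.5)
Your proof is correct. Your reading of the inverse as $\U\mapsto(\I_n+\U)^{-1}(\I_n-\U)$ matches the paper's own proof, which uses exactly that formula, so the $j\U$ in the statement is indeed a typo.

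The route differs mildly from the paper's.

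\textbf{Symmetry of the image.} The paper diagonalizes $\B=\V\Sigmab\V^T$ with $\V$ real orthogonal, so that transposition only acts on the diagonal factor $(\I_n+j\Sigmab)^{-1}(\I_n-j\Sigmab)$. You avoid any decomposition and use only that $\I_n\pm\X$ and $(\I_n+\X)^{-1}$ commute and transpose to the same rational function of $\X^T$.

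\textbf{The remaining claims.} Here the paper is terse:
\begin{itemize}
\item it takes unitarity from the general Cayley transform discussed earlier;
\item it does not check that the image avoids the eigenvalue $-1$;
\item it simply asserts that the inverse transform lands in $j{\cal B}$ and concludes that $\varphi$ is a diffeomorphism;
\item it leaves the derivative computation as ``easy to show''.
\end{itemize}
Your Step 2 supplies the missing argument: $\X^H=-\X$ and $\X^T=\X$ force $\X^*=-\X$, hence $\X=j\B$ with $\B$ real symmetric. You also give the explicit two-sided inverse check, smoothness via rational entries with nonvanishing denominators, and the product-rule derivative. Your version is therefore the more complete of the two.

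\textbf{What the paper's route buys.} In exchange, the paper's diagonalization directly produces the factorization $\U=\V\Lambdab\V^T$ with $\V$ real orthogonal used in Remark~\ref{remark:decomp}. Your commutation argument does not give that factorization by itself.
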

\begin{proof}[Proof of Proposition \ref{prop:Cayley}]
    Let $\B=\V{\bf{\Sigma}}\V^T$ be a Schur decomposition of $\B$ where $\B$ is real and symmetric as in the hypotheses, and note that
    \begin{align*}
    [(\I_n + j\B)^{-1}(\I_n - j\B)]^T=&
    [\V(\I_n + j{\bf{\Sigma}})^{-1}(\I_n - j{\bf{\Sigma}})\V^T]^T\\=&
    \V(\I_n + j{\bf{\Sigma}})^{-1}(\I_n - j{\bf{\Sigma}})\V^T\\=&
    (\I_n + j\B)^{-1}(\I_n - j\B),
    \end{align*}
    where for the second equality we have used that diagonal matrices commute. Hence, the image of $j\B$ by the Cayley transform lies in $\CU_s$. Moreover, the inverse Cayley transform
    \[
    \U\mapsto(\I_n+\U)^{-1}(\I_n-\U)
    \]
    sends $\U\in\CU_s$ to $j\B$ with $\B$ real and symmetric, hence the map is a diffeomorphism.
\end{proof}


\begin{remark}[{\bf A decomposition for matrices in $\CU_s$ alternative to Takagi factorization}] \label{remark:decomp}
     In terms of the Schur decomposition $\B=\V{\bf{\Sigma}}\V^T$ of $\B$, where $\V$ and ${\bf{\Sigma}}$ are real matrices, $\V$ is orthogonal and ${\bf{\Sigma}}=\diag(\lambda_1,\ldots,\lambda_n)$, we can write
\begin{eqnarray*}
    \U =&\V(\I_n + j{\bf{\Sigma}})^{-1}(\I_n - j{\bf{\Sigma}})\V^T= \\
    &\V \diag\left(\frac{1-j\lambda_1}{1+j\lambda_1},\ldots,\frac{1-j\lambda_n}{1+j\lambda_n}\right)\V^T.
\end{eqnarray*}
    In particular, any matrix $ \U \in \CU_s$ can be decomposed in the form 
    \begin{equation}
        \U=\V{\bf \Lambda}\V^T, \label{eq:realdecomp}
    \end{equation}
    where $\V$ is a real orthogonal matrix and ${\bf \Lambda}=\diag(e^{j\theta_1},\ldots,e^{j\theta_n})$ is a diagonal matrix with entries in the unit circle $e^{j\theta_i} = \frac{1-j\lambda_i}{1+j\lambda_i}$. Note that this also covers the case that $\U$ has some eigenvalue equal to $-1$, which corresponds to the limit case that some eigenvalue(s) of $\B$ grow to $\infty$. 

    Eq. \eqref{eq:realdecomp} provides an alternative decomposition to Takagi factorization for unitary and symmetric matrices\footnote{The decomposition in \eqref{eq:realdecomp} has been used in \cite{NeriniTWC2023}, without a rigorous proof, to find a closed-form solution for the BD-RIS maximizing the received signal power in a single-input single-output (SISO) channel. Proposition \ref{prop:Cayley} provides a solid foundation for the use of \eqref{eq:realdecomp} in BD-RIS design and other problems involving unitary and symmetric matrices.}. Another way to prove it is as follows: given $\U \in \CU_s$, let us write its real and imaginary parts, $\U = \U_r + j \U_i$. It is easy to check from the facts that $\U\U^H=\I_n$ and $\U=\U^T$  that $\U_r$ and $\U_i$ commute and hence are simultaneusly diagonalizable:
   \begin{eqnarray*}
       \U_r &=& \V \diag(\cos(\theta_1), \ldots, \cos(\theta_n)) \V^T, \\
       \U_i &=& \V \diag(\sin(\theta_1), \ldots, \sin(\theta_n)) \V^T,
   \end{eqnarray*} 
    therefore yielding \eqref{eq:realdecomp}. Conversely, given $\U \in \CU_s$ factorized as \eqref{eq:realdecomp}, we can easily construct a unitary matrix $\Q = \V {\bf \Lambda}^{1/2}$ and hence $\U = \Q \Q^T$.

    \end{remark}

\section{Manifold optimization algorithms}
\label{sec:MOgeneric}
\subsection{LS and PO algorithms}
Suppose we want to maximize a continuous and differentiable function, $f(\U)$, with $\U \in \CU_s$, using the classical gradient ascent algorithm. It remains to be discussed how to select the step size, $\mu$, along the geodesic in \eqref{eq:geodesic2}. The standard approach to find a suitable $\mu$ is to apply a line search procedure that ensures $f(\U_{k+1})>f(\U_k)$, as described in \cite[Ch. 4]{AbsilBook}. Nevertheless, the parameterization of the geodesic in \eqref{eq:geodesic2} allows us to conceive other methods that increase the search space by optimizing a vector $\mub = (\mu_1,\ldots,\mu_n)$ instead of optimizing a single adaptation step $\mu$ common to all phases $\theta_m$. More specifically, we consider two different MO algorithms in $\CU_s$ that differ in how the phases in the diagonal matrix defined in \eqref{eq:RISmatrix} are chosen. 
\begin{itemize}
    \item {\bf Line Search (LS)}: The step size $\mu$ is common to all phases. The standard approach to find a suitable $\mu$ is to apply a line search procedure that ensures $f(\U_{k+1})>f(\U_k)$. Typically, it is relatively easy to find the optimal value of $\mu$ by bisection that maximizes the univariate function
\begin{equation}
   \max_{\mu \geq 0 } g(\mu) =  \max_{\mu \geq 0 } f(\Q_R  \Lambdab_R^\mu \Q_R^T),
   \label{eq:optmu}
\end{equation}
where $\Q_R$ and $\Lambdab_R$ are the values at the current iteration.
 \item {\bf Phase Optimization (PO)}: Each phase of the diagonal matrix in \eqref{eq:RISmatrix} can use a different $\mu_m$ parameter, thereby expanding the search space and potentially accelerating the algorithm. More specifically, the new phases, denoted as $\theta_m' = \theta_m \mu_m$, $m=1,\ldots, n$, can be optimized independently. Depending on the cost function, this optimization can be performed sequentially or, in some cases, jointly with closed-form solutions. The proposed algorithm\footnote{Matlab code can be downloaded from \url{https://github.com/IgnacioSantamaria/Code-MaxCap-BD-RIS-MIMO}.} with this phase optimization procedure for the selection of the new phases is summarized in Algorithm~\ref{alg:MOopt}.
\end{itemize}

\subsection{Convergence analysis}
In this subsection, we analyze the convergence of the two Riemannian optimization algorithms on $\CU_s$. We assume that the function to be maximized, $f(\U)$, is bounded from above, continuous, and differentiable. Regarding the LS-based MO algorithm, if the step size $\mu$ in the LS procedure is selected to satisfy $f(\U_{k}) > f(\U_{k-1})$  (chosen, for example, using Armijo’s rule) and under mild conditions (see \cite[Ch. 4]{AbsilBook}), the sequence of iterates $\{\U_k\}$, $k=0,1,\ldots,$ provided by Algorithm \ref{alg:MOopt} converges to a stationary point of $f(\U)$. A detailed proof of the global convergence of monotone line search Riemannian algorithms on manifolds can be found in \cite[Ch. 4]{AbsilBook}.

\begin{algorithm}[!t]
\small
\DontPrintSemicolon
\SetAlgoVlined
\KwIn{Initial $\U_0 = \Q_0\Q_0^T \in \CU_s$, Cost function: $F_0 = f(\U_0)$, convergence threshold: $\epsilon$}
\KwOut{Final $\U = \Q\Q^T \in \CU_s$}
\For{$k = 1,\ldots$}{
Calculate gradient $\J =\nabla f(\U)$ at $\U_{k-1} = \Q_{k-1}\Q_{k-1}^T$ \;
Find $\R$ as in \eqref{eq:proj}\;
Perform eigendecomposition $j\R = \V_R \Sigmab_R \V_R^T$\;
$\Lambdab_R = \exp(\Sigmab_R) = \diag \left(e^{j\theta_1},\ldots, e^{j\theta_n} \right)$\;
$\Q_{R} = \Q_{k-1} \V_R$ \;
\For{$m=1,\ldots,n$}{
$\theta_m' = {\rm arg}\max_{\theta_m} f(\Q_{R} \Lambdab_R(\theta_m) \Q_{R}^T)$}

Obtain the new Takagi factor as $\Q_{k} = \Q_{k-1} \V_R \Lambdab_R^{1/2}$\;
Obtain $\U_{k} = \Q_{k}\Q_{k}^T$ and $F_k = f(\U_{k})$\;
Check convergence: $|F_k-F_{k-1}| < \epsilon$ 
}
\caption{Manifold optimization algorithm in $\CU_s$ with Phase Optimization (for the LS version, step 7 is replaced by $\theta_m' = \mu \theta_m$ with $\mu$ selected by a line search procedure).}
\label{alg:MOopt}
\end{algorithm}

Regarding the PO-based MO algorithm, since the cost function cannot decrease with iterations, its monotonic convergence is guaranteed. Nevertheless, this is a trivial property that says little about the convergence of the sequence of iterates $\{\U_k\}$. To prove convergence to a stationary point, a more detailed analysis is required for each specific function to be optimized. The PO procedure in steps 7–8 of Algorithm \ref{alg:MOopt} belongs to the category of block coordinate ascent methods  \cite{Bertsekas_nonlinear}, where the objective function $f(\theta_1,\ldots,\theta_n)$ is cyclically maximized with respect to each phase $\theta_m \in [0, 2\pi]$. The following conditions are sufficient for first-order convergence (for a more detailed analysis, see \cite{Tseng01}, \cite{Luo13}):
\begin{itemize}
\item[C1)] {\bf Continuity}. $f(\theta_1,\ldots,\theta_n)$ is continuous on $[0,2\pi]^n$. Since this domain is compact, $f(\theta_1,\ldots,\theta_n)$ is bounded above and every sequence of iterates has at least one limit point.
\item[C2)] {\bf Differentiability}. $f(\theta_1,\ldots,\theta_n)$ is differentiable with respect to each $\theta_m$.
\item[C3)] {\bf Unique maximizer per phase}. For each $m$ and for all fixed values of the remaining phases, $f(\theta_1,\ldots, \theta_{m-1}, \theta_m, \theta_{m+1},\ldots,\theta_n)$ has a unique maximizer on $[0,2\pi]$. This is the critical condition which rules out the pathological cases discussed in \cite{Tseng01}.
\item[C4)] {\bf Cyclic update rule}. Every phase is updated at least once every $n$ steps.
\end{itemize}

The most restrictive and critical condition is (C3). In the application to BD-RIS design discussed in the following section, the three cost functions considered have a unique solution for each phase-optimization subproblem. Therefore, convergence of the PO algorithm to a stationary point is also guaranteed in those cases.

\section{Optimization problems in BD-RIS-assisted MIMO links}
\label{sec:applic}
\subsection{Introduction}
\label{sec:appintr}
The design of reconfigurable intelligent surfaces (RISs), and, in particular, of beyond-diagonal RIS (BD-RIS), is the application that has motivated the study of the manifold $\CU_s$. A BD-RIS is a two-dimensional structure composed of $M$ reconfigurable elements that allow the phase and amplitude of the incident electromagnetic wave to be manipulated. For example, the electromagnetic wave reflected by the surface can be directed toward an intended receiver to maximize the power of the received signal. The recent tutorial \cite{ClerkxBDRISTut26} provides a comprehensive analysis of BD-RISs, including their mathematical modeling, proposed architectures, the various metrics they can optimize, and their performance benefits. Specifically, a fully connected BD-RIS with $M$ elements is modeled by an $M \times M$ scattering matrix, $\Thetab$, all of whose entries can be optimized. Since it is modeled as a lossless device (i.e., all incident energy is reflected by the BD-RIS), $\Thetab$ is a unitary matrix. In addition, if the BD-RIS is implemented with reciprocal passive components such as capacitors or inductors, $\Thetab$ is a symmetric matrix. Under these assumptions, the scattering matrix $\Thetab$ of a fully connected BD-RIS belongs to $\CU_s$.

We consider the design of a fully connected BD-RIS with $M$ elements in a multiple-input multiple-output (MIMO) channel with $N_t$ transmit antennas and $N_r$ receive antennas. The equivalent MIMO channel is
\begin{equation}
\H_{eq} (\Thetab) = \H_d + \F \Thetab \G^H,
\label{eq:MIMOchanneleq}
\end{equation} 
where $\G \in \mathbb{C}^{N_t \times M}$ is the channel from the Tx to the BD-RIS, $\F \in \mathbb{C}^{N_r \times M}$ is the channel from the BD-RIS to the Rx, $\H_d \in \mathbb{C}^{N_r \times N_t}$ is the MIMO direct link, and $\Thetab \in \CU_s$ is the $M \times M $ BD-RIS matrix.

To optimize the usual metrics in communication problems, such as the sum channel gain (Frobenius norm of $\H_{eq} (\Thetab)$) or the achievable data rate, there are generally no closed-form solutions for the optimal $\Thetab \in \CU_s$. Existing solutions are either iterative, with varying degrees of computational complexity \cite{Zhou24, SantamariaSPAWC24}, or suboptimal, based on some relaxation of the BD-RIS constraints followed by a projection onto the set of unitary and symmetric matrices \cite{MaoCL2024}. Next, we first describe the BD-RIS-assisted MIMO simulation scenario and then apply the proposed MO framework in $\CU_s$ to find solutions to the maximization of the sum channel gain and the maximization of the achievable rate.

\subsection{Low-rank BD-RIS solution}
\subsubsection{Main results}
To increase channel shaping flexibility, it is common to use a BD-RIS with a large number of elements such that $M >> \max(N_t, N_r)$. Under this assumption, the channel through the BD-RIS is a product of a left fat matrix $\F$, the square BD-RIS matrix $\Thetab$, and a right tall matrix $\G^H$. This suggests that only the row space of $\F$ and the column space of $\G^H$ play a role in the optimization of $\Thetab$, and thus a low-rank $\Thetab$ might provide the optimal value of the cost function. This idea is formalized in the following result, which shows that for any cost function  the rank of the optimal solution is {\it at most} $r = N_t+N_r$. As we shall see, this result can be exploited to significantly reduce the computational cost of the corresponding MO algorithms.

\begin{theorem}
\label{th:theoremUs}
Let $\Thetab \in \CU_s$ be an arbitrary $M \times M$ full-rank unitary and symmetric BD-RIS matrix, and consider an optimization problem written as $\max_{\Thetab \in \CU_s} f(\F \Thetab \G^H)$, where $\F$ and $\G$ are defined as in \eqref{eq:MIMOchanneleq} and $M > N_r+N_t$. Then, the following symmetric BD-RIS matrix of rank at most $r = N_r+N_t$,
\begin{equation}
 \Thetab_{lr} = \U_Z \tilde{\Thetab} \U_Z^T,   
\end{equation}
where $\tilde{\Thetab} =\U_Z^H\Theta\U_Z^*$ and $\U_Z$ is a basis for the column space of the $M \times r$ matrix $\Z = \left [ \F^H | \G^T \right ]$ satisfies: i) $\F \Thetab\G^H = \F\Thetab_{lr}\G^H$ and hence $\H_{eq} (\Thetab) = \H_{eq} (\Thetab_{lr})$, and ii) $\Thetab_{lr} \Thetab_{lr}^H \preceq \I_M$. The structure of the low-rank matrix $\Thetab_{lr}$ is depicted in Fig. \ref{fig:StrThetaLRUs}, where we note that the inner block $\tilde{\Thetab}$ is an $r \times r$ symmetric matrix that satisfies $\tilde{\Thetab}\tilde{\Thetab}^H \preceq \I_{r}$. 
\end{theorem}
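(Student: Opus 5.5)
Let $\U_Z\in\mathbb{C}^{M\times r'}$, $r'=\mathrm{rank}(\Z)\le r$, have orthonormal columns spanning the column space of $\Z=[\F^H|\G^T]$, and let $\P=\U_Z\U_Z^H$ be the orthogonal projector onto that space. Then $\Thetab_{lr}=\U_Z\U_Z^H\Thetab\U_Z^*\U_Z^T=\P\Thetab\P^T$, since $\U_Z^*\U_Z^T=(\U_Z\U_Z^H)^*=\P^*=\P^T$ ($\P$ is Hermitian). The plan is to check the three claims (symmetry and rank, channel invariance, contraction) directly from this representation. None of the manifold machinery (Takagi, tangent spaces) is needed.

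\textbf{Symmetry and rank.} Since $\Thetab^T=\Thetab$, we get $\tilde{\Thetab}^T=\U_Z^H\Thetab^T\U_Z^*=\tilde{\Thetab}$. It follows that $\Thetab_{lr}^T=\U_Z\tilde{\Thetab}^T\U_Z^T=\Thetab_{lr}$. The rank is at most $r'\le r$ because $\Thetab_{lr}$ factors through the $r'\times r'$ matrix $\tilde{\Thetab}$.

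\textbf{Claim i).} The columns of $\F^H$ lie in $\mathrm{col}(\Z)$, so $\P\F^H=\F^H$, i.e.\ $\F\P=\F$. Likewise the columns of $\G^T$ lie in $\mathrm{col}(\Z)$, so $\P\G^T=\G^T$. Conjugating gives $\P^*\G^H=\G^H$, and since $\P^*=\P^T$ this reads $\P^T\G^H=\G^H$. Therefore
\begin{equation*}
\F\Thetab_{lr}\G^H=(\F\P)\,\Thetab\,(\P^T\G^H)=\F\Thetab\G^H,
\end{equation*}
and adding $\H_d$ gives $\H_{eq}(\Thetab)=\H_{eq}(\Thetab_{lr})$. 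The one place needing care is the transpose: the right factor involves the projector onto the space spanned by $\G^T$, not $\G^H$. This is exactly why $\Z$ contains $\G^T$ and the outer factor is $\U_Z^T$ rather than $\U_Z^H$. Getting this conjugation bookkeeping right is the main (modest) obstacle.

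\textbf{Claim ii) and the contraction of $\tilde{\Thetab}$.} Using $\U_Z^H\U_Z=\I_{r'}$ and $\U_Z^T\U_Z^*=(\U_Z^H\U_Z)^*=\I_{r'}$, we get $\Thetab_{lr}\Thetab_{lr}^H=\U_Z\tilde{\Thetab}\tilde{\Thetab}^H\U_Z^H$. This reduces ii) to showing $\tilde{\Thetab}\tilde{\Thetab}^H\preceq\I_{r'}$. For that, compute
\begin{equation*}
\tilde{\Thetab}\tilde{\Thetab}^H=\U_Z^H\Thetab\,\P^T\,\Thetab^H\U_Z\preceq\U_Z^H\Thetab\Thetab^H\U_Z=\I_{r'},
\end{equation*}
where the inequality holds because $\P^T=\P^*$ is an orthogonal projector, so $\P^T\preceq\I_M$, and congruence preserves the Loewner order. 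Then $\Thetab_{lr}\Thetab_{lr}^H\preceq\U_Z\U_Z^H=\P\preceq\I_M$. Finally, I would note that if the statement intends $\tilde{\Thetab}$ to be $r\times r$ when $\mathrm{rank}(\Z)<r$, one can pad $\U_Z$ with zero columns without changing any of the above.
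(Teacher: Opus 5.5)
Your proof is correct and follows essentially the same route as the paper's. Both rewrite $\Thetab_{lr}={\bf P}_Z\Thetab{\bf P}_Z^T$ with ${\bf P}_Z=\U_Z\U_Z^H$, obtain i) from $\F{\bf P}_Z=\F$ and ${\bf P}_Z^T\G^H=\G^H$, and get ii) from ${\bf P}_Z$ and ${\bf P}_Z^T$ being contractions. The paper writes this as the operator-norm chain $\|{\bf P}_Z\Thetab{\bf P}_Z^T\x\|\le\|\Thetab{\bf P}_Z^T\x\|=\|{\bf P}_Z^T\x\|\le\|\x\|$ instead of your Loewner congruence, and your version also proves $\tilde{\Thetab}\tilde{\Thetab}^H\preceq\I_r$, which the paper only asserts.
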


\begin{proof}
See Appendix B.
\end{proof}

\begin{figure}[t]
    \centering
    \begin{tikzpicture}[every node/.style={align=center}]
  \coordinate (top) at (0,0);
  \node (A) [draw,minimum height=3.1cm, minimum width=1.2cm, anchor=north west] at (top) {\(\U_Z\)};
  \node (B) [draw,minimum height=1.2cm, minimum width=1.2cm, anchor=north west] at ([xshift=1.2cm+0.1cm]top) {\(\tilde{\Thetab}\) };
  \node (C) [draw,minimum height=1.2cm, minimum width=3.1cm, anchor=north west] at ([xshift=1.2cm+0.1cm+1.3cm]top) {\(\U_Z^T\) };
  \node[anchor=center] at (-0.7,-1.6) {$\Thetab_{lr}=$};
\end{tikzpicture}
    \caption{Structure of the low-rank {\it symmetric} matrix, $\Thetab_{lr}$, in Theorem \ref{th:theoremUs}. The inner block, $\tilde{\Thetab}$, is an $r \times r$ symmetric matrix with $r = N_t+N_r$.}
    \label{fig:StrThetaLRUs}
\end{figure}
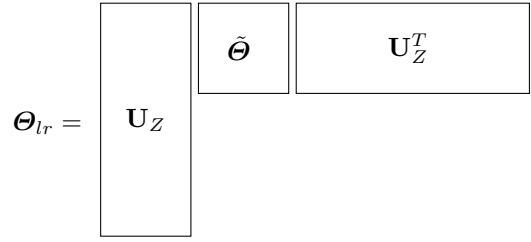

Suppose we have the following problem
\begin{equation}
     ({\cal{P}}_1): \quad \max_{\Thetab \in \CU_s(M)} f(\F \Thetab \G^H),
     \label{eq:P1}
\end{equation}
where $\F$, $\Thetab$, and $\G$ have the dimensions specified in \eqref{eq:MIMOchanneleq}. Compute a basis, $\U_Z$, for the column space of the $M \times r$ matrix $\Z = \left [ \F^H | \G^T \right ]$, and define the transformed channels $\tilde{\F} = \F \U_Z$ and $\tilde{\G} = \G \U_Z^*$. Notice that $\tilde{\F}$ is $N_r \times r$ and ${\tilde \G}$ is $N_t \times r$. Theorem \ref{th:theoremUs} shows that ${\cal{P}}_1$ is equivalent to
\begin{equation}
     ({\cal{P}}_2): \quad \max_{\tilde{\Thetab} \in {{\cal{B}}_s}(r)} f(\tilde{\F} \tilde{\Thetab} {\tilde \G}^H),
     \label{eq:P2}
\end{equation}
where ${{\cal{B}}_s}(r) = \{ \U \in \mathbb{C}^{r \times r}: \U = \U^T, \U\U^H \preceq \I_r \}$. For the problems considered in this paper (maximization of the Frobenius norm of the equivalent channel, maximization of the achievable rate, or minimization of the MSE), numerical checks show that the solution to ${\cal{P}}_2$ can be chosen on $\CU_s(r)$. This means that the inequality $\tilde{\Thetab} \tilde{\Thetab}^H \preceq \I_{r}$ is satisfied with equality $\tilde{\Thetab} \tilde{\Thetab}^H = \I_{r}$ at some maximizer (or minimizer for the MSE cost function). The following proposition formalizes this claim for the maximization of the Frobenius norm.

\begin{proposition}\label{prop:frobnorm} Suppose $M > r= N_t+N_r$ with $\tilde{\F}$ and $\tilde{\G}$ defined as before. Then, the following problem
\begin{equation}
({\cal{P}}_{2a}): \quad \max_{\tilde{\Thetab} \in {\cal{B}}_s(r)} \|\tilde{\F} \tilde{\Thetab} {\tilde \G}^H\|_F^2,
\label{eq:P2a}
\end{equation}  
is equivalent to
\begin{equation}
     ({\cal{P}}_{3a}): \quad \max_{\tilde{\Thetab} \in \CU_s(r)} \|\tilde{\F} \tilde{\Thetab} {\tilde \G}^H\|_F^2,
     \label{eq:P3a}
\end{equation}
where we have replaced the feasibility set ${\cal{B}}_s(r)$ by the manifold $\CU_s(r)$. In other words, the maximum of the Frobenius norm in \eqref{eq:P2a} is attained at a rank-$r$ unitary and symmetric matrix. 
\end{proposition}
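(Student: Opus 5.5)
The plan is to show that the maximum of the convex function $h(\tilde{\Thetab}) = \|\tilde{\F}\tilde{\Thetab}\tilde{\G}^H\|_F^2$ over the compact convex set ${\cal{B}}_s(r)$ is attained at an extreme point, and that the extreme points of ${\cal{B}}_s(r)$ are exactly the matrices of $\CU_s(r)$. Since $\CU_s(r)\subset{\cal{B}}_s(r)$, this gives $\max_{{\cal{P}}_{2a}} = \max_{{\cal{P}}_{3a}}$, which is the claimed equivalence.

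First, ${\cal{B}}_s(r)$ is convex and compact. The condition $\U\U^H\preceq\I_r$ is equivalent to $\|\U\|_2\le 1$, so the set is the intersection of the spectral-norm unit ball with the linear subspace of complex symmetric matrices. Next, $h$ is convex, because it is the squared Frobenius norm composed with the linear map $\tilde{\Thetab}\mapsto\tilde{\F}\tilde{\Thetab}\tilde{\G}^H$. A continuous convex function on a compact convex set attains its maximum at an extreme point: by Krein--Milman/Carathéodory, any maximizer is a convex combination of extreme points, and $h$ at that combination is bounded by the largest value at those points.

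The main step is to characterize the extreme points of ${\cal{B}}_s(r)$. Take $\A\in{\cal{B}}_s(r)$ with Takagi factorization $\A=\Q\Sigmab\Q^T$ (Theorem of Takagi), where $0\le\sigma_i\le 1$. Suppose some $\sigma_k<1$. Choose $\epsilon>0$ with $\sigma_k\pm\epsilon\in[0,1]$ and set $\A_\pm=\Q(\Sigmab\pm\epsilon\,{\bf e}_k{\bf e}_k^T)\Q^T$. Both matrices are symmetric, their singular values lie in $[0,1]$, and $\A=(\A_++\A_-)/2$ with $\A_+\neq\A_-$. Hence $\A$ is not extreme, so every extreme point has $\Sigmab=\I_r$, i.e. it equals $\Q\Q^T\in\CU_s(r)$.

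For the statement itself only this direction is needed. One could also note that every unitary matrix is extreme in the spectral-norm ball (by strict convexity of the Euclidean norm applied to $\U{\bf x}$), and hence also extreme in the smaller set ${\cal{B}}_s(r)$. Combining these steps, a maximizer of ${\cal{P}}_{2a}$ exists in $\CU_s(r)$, and it is a full-rank (rank-$r$) unitary symmetric matrix.

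The only delicate point is the extreme-point argument. The perturbation must preserve symmetry, which is why I perturb inside the Takagi form rather than the SVD. In addition, we need the fact that a convex function's maximum over a compact convex set is attained at an extreme point, which in finite dimensions follows from Carathéodory's theorem.
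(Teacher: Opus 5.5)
Your proof is correct, but it takes a different route from the paper.

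\textbf{The paper's route (local).} It works at a maximizer $\tilde\Thetab_{opt}=\Q\Sigmab\Q^T$. If some $s_m<1$, it moves along $se^{i\theta}\Q{\bf e}_m{\bf e}_m^T\Q^T$ and uses the explicit quadratic expansion of the objective in $s$. Optimality forces the linear coefficient to vanish (for every $\theta$). It then forces the nonnegative $s^2$ coefficient $\|\tilde\F\Q{\bf e}_m{\bf e}_m^T\Q^T\tilde\G^H\|_F^2$ to vanish as well. So the objective is constant along that segment, and $s_m$ can be raised to $1$ at no cost, index by index.

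\textbf{Your route (global).} You observe that $h$ is convex and ${\cal B}_s(r)$ is compact and convex, so the maximum is attained at an extreme point. You then use the same Takagi-diagonal perturbation to write any point with some $\sigma_k<1$ as the midpoint of two feasible points. Hence every extreme point lies in $\CU_s(r)$.

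\textbf{What each buys.} The paper's version is fully elementary and constructive: it turns any given maximizer into a unitary one. Yours needs the Minkowski--Carath\'eodory fact, but it isolates the structural reason for the result. It also applies verbatim to any convex function of $\tilde\Thetab$, e.g.\ $\|\H_d+\tilde\F\tilde\Thetab\tilde\G^H\|_F^2$ with a direct link. It further shows why the argument does not immediately cover the rate and MSE problems, where the paper offers only numerical evidence: those objectives are in general not convex (for maximization), resp.\ not concave (for minimization), in $\tilde\Thetab$.

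\textbf{A small slip to fix.} When $\sigma_k=0$ (singular $\A$), no $\epsilon>0$ satisfies $\sigma_k\pm\epsilon\in[0,1]$. This is harmless. The singular values of $\Q(\Sigmab\pm\epsilon{\bf e}_k{\bf e}_k^T)\Q^T$ are the moduli of the diagonal entries, so you only need $|\sigma_k\pm\epsilon|\le 1$. That holds for every $\sigma_k\in[0,1)$ with, e.g., $\epsilon=1-\sigma_k$.
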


\begin{proof}
See Appendix C.
\end{proof}     

\subsubsection{Implications of a low-rank solution}  
An $M \times M $ BD-RIS is implemented through an $M$-port network interconnected through a reconfigurable admittance matrix, $\Y$, which is the inverse of the impedance \cite{ClerckxTWC22a}, \cite{Nossek24}. To maximize the power reflected by the intelligent surface, the admittance matrix must be purely imaginary, i.e., $\Y = j\B$, where $\B$ is the so-called susceptance matrix \cite{ClerkxBDRISTut26},\cite{Nerini24}. The scattering matrix, $\Thetab$, and the susceptance matrix, $\B$, are related by the Cayley transform (cf. Sec.~\ref{sec:Cayley})
\begin{equation}
    \Thetab = (\I_M + jZ_0 \B)^{-1}(\I_M - jZ_0 \B),
    \label{eq:B2Thetainv}
\end{equation}
where $Z_0$ is the reference impedance, usually set to  $Z_0 =  50 \, \Omega$. The diagonal elements, $b_{ss}$, $s=1,\ldots, M$, of $\B$ are the inverses of the impedances connecting each reflecting element to ground, while $b_{st}$ ($s \neq t$) model the connection between elements $s$ and $t$. From \eqref{eq:B2Thetainv}, it follows that for any configuration of $\B$, $\Thetab$ is always a full-rank matrix. What, then, are the implications of finding a low-rank solution to our optimization problems?

One immediate advantage is a significant reduction in computational complexity, since the optimization problem can be solved in $\CU_s(r)$ rather than in $\CU_s(M)$. Notice that, from the $r \times r$ solution, $\tilde{\Thetab}$, we recover a low-rank $M \times M$ solution as $\Thetab_{lr} = \U_Z \tilde{\Thetab} \U_Z^T$, which can be factored as $\Thetab_{lr} = \Q\Q^T$, where $\Q$ is a semi-unitary $M \times r$ matrix. The low-rank solution can be completed to a full-rank $M \times M$ unitary and symmetric matrix in $\CU_s(M)$ as
\begin{equation}
    \Thetab_{fr} = \underbrace{\Q \Q^T}_{\Thetab_{lr}} + \Q_{\perp} \Q_{\perp}^T,
    \label{eq:fullrank}
\end{equation}
where $\Q_{\perp}$ is the orthogonal complement of $\Q$. Since $\F (\Q_{\perp} \Q_{\perp}^T) \G^H = {\bf 0}_{N_r \times N_t}$, $\Thetab_{fr}$ and $\Thetab_{lr}$ are equivalent solutions. In fact, we could rotate the subspace $\Q_{\perp}$ arbitrarily without altering the solution.

A second advantage of the low-rank solution is that, since the scattering matrix needs to be approximated only in an $r$-dimensional subspace, it is possible to find structured susceptance matrices, $\B$, that require fewer connections than the traditional fully-connected architecture \cite{ClerkxBDRISTut26,ClerckxTWC22b,ClerckxTWC22a}, in which all reflecting elements are interconnected with a total number of $M(M+1)/2$ reconfigurable circuits. In particular, the $q$-stem topology for $\B$, proposed in \cite{ICC_QStem} and analyzed in detail in \cite{Zhou2025Arxiv,WuTIT25,Nerini24}, is capable of perfectly approximating $\Thetab_{lr}$ and, in this way, achieving the performance of fully connected networks with a significantly lower number of impedances. The susceptance matrix of the $q$-stem topology is structured as
\begin{equation}
        \B = \begin{bmatrix} \B_{11} & \B_{12}^T \\ \B_{12}  & \Gammab \end{bmatrix},
        \label{eq:qstem}
\end{equation}
where $\B_{11} = \B_{11}^T$ is a $q \times q$  symmetric and real matrix, $\B_{12}$ is a $(M-q) \times q$ block and $\Gammab = \diag(b_{qq+1},\ldots b_{M})$ is a $(M-q) \times (M-q)$ diagonal matrix. Therefore, the lower-right off-diagonals of $\B$ in the $q$-stem topology are zero, thus reducing the total number of reconfigurable elements to $\nu = q(q+1)/2 + (M-q)(q+1)$. In \cite[Corollary 2]{WuTIT25} and \cite[Lemma3]{MaxDet}, it is shown that a rank-$r$ scattering matrix can be implemented by a $q$-stem topology with $q = r-1$ stems.

\subsection{Sum channel gain maximization}
\label{sec:FrobNormopt}
The sum channel gain maximization problem amounts to maximizing the Frobenius norm of the equivalent channel
\begin{equation}
\label{eq:MaxMIMOFrobNOrm}
\,\max_{\Thetab \in \CU_s(M)}\,\, \| \H_d + \F \Thetab \G^H \|_F^2.
\end{equation}
Disregarding scaling constants, the unconstrained gradient of \eqref{eq:MaxMIMOFrobNOrm} is
\begin{equation}
    \J = \F^H \left( \H_d + \F \Thetab \G^H \right) \G. 
    \label{eq:unconstrainedC1}
\end{equation} 
The projection of $\J$ onto the tangent plane gives us the symmetric and real matrix $\R$ in \eqref{eq:proj}. From the eigendecomposition of $\R$, the new BD-RIS can be parameterized as $\Thetab = \Q_R \diag \left(e^{j\theta_1},\ldots, e^{j\theta_M} \right) \Q_R^T$. For the LS procedure, we find the optimal step size by bisection. For the PO procedure, we substitute the BD-RIS parametrization $\Thetab = \Q_R \diag \left(e^{j\theta_1},\ldots, e^{j\theta_M} \right) \Q_R^T$ into the cost function \eqref{eq:MaxMIMOFrobNOrm} and apply an alternating optimization method to optimize each of the phases $\theta_m$ independently. More precisely, the equivalent channel can be written as a function solely of $\theta_m$ as
\begin{equation}
\H_{eq}(\theta_m) = \underbrace{\H_d + \F \Q_R \Lambdab_R(e^{j\theta_m} = 0)\Q_R^T \G^H}_{{\bf S}} + e^{j\theta_m} \f_m \g_m^H,  
\label{eq:Heqphase}
\end{equation}
where $ \Lambdab_R(e^{j\theta_m} = 0)$ denotes the diagonal matrix $\Lambdab_R = \diag \left(e^{j\theta_1},\ldots, e^{j\theta_M} \right)$ with the $m$-th entry set to zero, $\f_m$ is the $m$-th column of $ \F \Q_R$ and $\g_m^H$ is the $m$-th row of $\Q_R^T \G^H$. Assuming the other phases are fixed, the optimal $\theta_m$ that maximizes the Frobenius norm of $\H_{eq}(\theta_m)$ is
\begin{equation}
    \theta_m' = - \angle \g_m^H {\bf S}^H \f_m,
    \label{eq:phaseopt}
\end{equation}
where ${\bf S}$ is defined in \eqref{eq:Heqphase}. In an inner iteration of the MO algorithm with PO, we iteratively optimize the phases from $m = 1$ to $m = M$, as in \eqref{eq:phaseopt}.

To compare the performance of the different MO algorithms, we consider an $N_t \times N_r$ MIMO system with the Tx located at coordinates (0,0,1.5) [m] and the Rx at coordinates (50,0,1.5) [m]. A fully-connected BD-RIS with $M$ elements is located close to the Rx at coordinates (50,3,3) [m] to assist the Tx-Rx communication. The channels through the BD-RIS have a dominant line-of-sight path and are therefore modeled as Rician with factor $K=3$ and path loss exponent $\alpha = 2$. The direct channel is modeled as a Rayleigh channel with path loss exponent that we take as $\alpha = 3.75$. The bandwidth is $20$ MHz and the system operates at 2.4 GHz. The path loss is $PL = PL_0 -\alpha 10 \log_{10} d$ where  $PL_{0} = 28$ dB is the path loss at a reference distance of $d_0 = 1$ meter and $\alpha$ is the path loss exponent. The power spectral density for the additive noise is $\sigma_n^2 =-174 + 10\log_{10}B$ dBm, and the transmit power is $P= 100 $ mW. The results shown are the average of 200 independent simulations (channel realizations), keeping the Tx, Rx, and BD-RIS positions fixed. For the proposed MO algorithms (both for the LS and PO versions), we used a convergence threshold of $\epsilon = 10^{-3}$.

In this section, we compare the performance of the following algorithms:
\begin{itemize}
     \item {\bf Full-rank MO algorithm with LS procedure}: The proposed MO algorithm operating over $\CU_s(M)$ (full-rank version) using a LS procedure. 
     \item {\bf Full-rank MO algorithm with PO procedure}: The proposed MO algorithm operating over $\CU_s(M)$ (full-rank version) using an alternating PO procedure. 
     \item {\bf Low-rank MO algorithm with LS procedure}: The proposed MO algorithm operating over $\CU_s(r)$ (low-rank version) with $r=N_t+ N_r$ using a LS procedure. 
     \item {\bf Full-rank MO algorithm with PO procedure}: The proposed MO algorithm operating over  $\CU_s(r)$ (low-rank version) with $r=N_t+ N_r$ using an alternating PO procedure. 
\end{itemize}

Fig. \ref{fig:ConvergenceComp} shows the convergence curves of the four different versions of the MO algorithm that solves \eqref{eq:MaxMIMOFrobNOrm}. All versions converge very quickly in the first few iterations, and then the final convergence to the stationary point is slower. The PO versions are slightly faster than the LS versions. A more accurate comparison of the different MO algorithms requires an analysis of their computational complexity. We provide a detailed analysis of the computational cost of the full-rank and low-rank PO versions. The LS versions replace the phase optimization by a bisection procedure. To simplify the notation, we consider an $N \times N$ MIMO system. Therefore, for the low-rank version $r = 2N$.

\begin{itemize}
    \item {\bf Full-rank MO algorithm with PO procedure}. The computation of the gradient in \eqref{eq:unconstrainedC1}  has a complexity $\mathcal{O}(MN^2 + M^2N)$. The projection onto the tangent plane involves the multiplication of $M \times M$ matrices and is hence $\mathcal{O}(M^3)$, and the eigendecomposition of $\R$ is $\mathcal{O}(M^3)$. The cost to compute each phase as in \eqref{eq:phaseopt} is $\mathcal{O}(N^2)$ and for each round we need to update $M$ phases. Once we have updated all phases, the update of the Takagi factor to $\Q_k = \Q_{k-1} \V_R \Lambdab_R^{1/2}$ is $\mathcal{O}(M^3)$. Finally, the computational cost per iteration grows as $\mathcal{O}(M N^2+ M^2 N + M^3)$, which is dominated by the last cubic term.

     \item {\bf Low-rank MO algorithm with PO procedure}. Since we are now optimizing a $2N \times 2N$ matrix $\tilde{\Theta_b}$ (remember that $r = 2N$), the cost per iteration of the MO algorithm is $\mathcal{O}(N^3)$, which is typically orders of magnitude smaller than the cost of the full-rank version since $M \gg N$. In addition to the per-iteration complexity, we must add the cost of obtaining a basis for $\Z = \left [ \F^H | \G^T \right ]$, which is $\mathcal{O}(M N^2)$, and the cost of obtaining the channels $\tilde{\F} = \F \U_Z$ and $\tilde{\G} = \G \U_Z^*$, which is $\mathcal{O}(M N^2)$.
\end{itemize}

To analyze the cost of the MO algorithm with LS, we consider the full-rank case. Instead of the cost of updating the phases, a conventional backtracking/Armijo line search along the geodesic evaluates the derivative of the cost function with respect to $\mu$ at 3–10 trial values of $\mu$ (halving or doubling $\mu$ each trial). Each trial is $\mathcal{O}(M^3)$  
and, consequently, the computational cost of LS is higher that the PO versions.

Fig. \ref{fig:RunTime}, which depicts the runtime of each algorithm as a function of $M$, confirms these results. The LS versions are computationally more expensive than the corresponding PO versions. And, logically, the cost of the low-rank versions is much lower than that of the full rank versions. Further, the runtime of the low rank versions is practically independent of $M$. Since all MO algorithms reach a stationary point with almost identical value of the cost function, it is advisable to use the low-rank PO version. 

\begin{figure}[htb]
    \centering
\includegraphics[width=.47\textwidth]{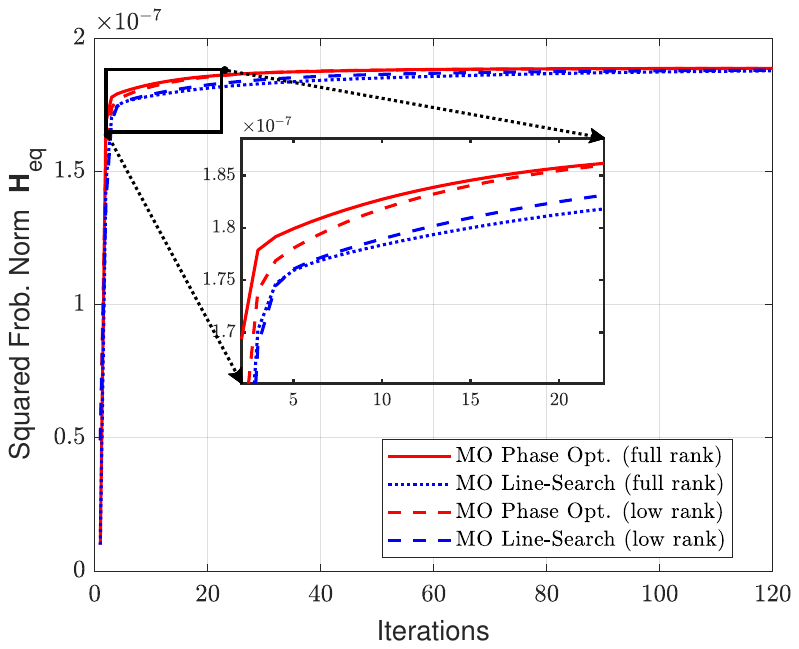}
     \caption{Convergence curves for the MO algorithms with line search (LS) and phase optimization (PO) for full-rank and low-rank BD-RIS.}
	\label{fig:ConvergenceComp}
\end{figure}

\begin{figure}[htb]
    \centering
\includegraphics[width=.47\textwidth]{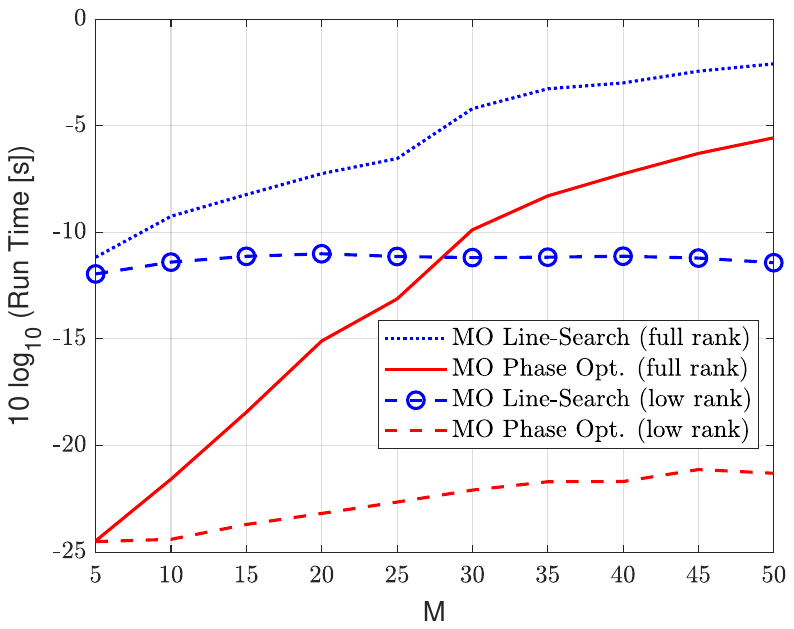}
     \caption{Run time in secs of the MO algorithms with line search and phase optimization.}
	\label{fig:RunTime}
\end{figure}

\subsection{Achievable rate maximization }
\label{sec:rateopt}
As a second application, we consider the maximization of the achievable rate in a BD-RIS-assisted MIMO system. We assume that the transmitter sends proper Gaussian signals with a fixed isotropic covariance matrix, $\x \sim \mathcal{CN}({\bf 0}, P \I)$, and the received signal is contaminated by additive white Gaussian noise, ${\bf n} \sim \mathcal{CN}({\bf 0}, \sigma^2 \I)$. Then, the BD-RIS that maximizes the achievable rate solves the problem
\begin{equation}
\label{eq:MaxCapMIMO}
\,\max_{\Thetab \in \CU_s(M)}\,\, \log \det \left( \I_{N_R} + \frac{P}{\sigma^2}\H_{eq}(\Thetab)\H_{eq}(\Thetab)^H \right), 
\end{equation}
where $\H_{eq}(\Thetab)$ is given by \eqref{eq:MIMOchanneleq}. Assuming a scenario in which $M > r= N_t+N_r$, we will consider only the low-rank PO version due to the advantages already mentioned in the previous application. First, we generate the new channels $\tilde{\F} = \F \U_Z$ and $\tilde{\G} = \G \U_Z^*$ of dimensions $N_r\times r$ and $N_t \times r$  where $\U_Z$ is a  basis for the column space of $\Z = \left [ \F^H | \G^T \right ]$. With this transformation, the new equivalent channel is $\H_{eq}(\tilde{\Thetab}) = \H_d + \tilde{\F} \tilde{\Thetab} \tilde{\G}^H$, but the problem is still posed as \eqref{eq:MaxCapMIMO} with $\tilde{\Thetab} \in \CU_s(r)$. Let us introduce the following matrix
\begin{equation}
    \E(\tilde{\Thetab}) = \I_{N_R} + \frac{P}{\sigma^2}\H_{eq}(\tilde{\Thetab})\H_{eq}(\tilde{\Thetab})^H. 
    \label{eq:MSEmatrix}
\end{equation}
With this definition and disregarding scaling constants, the unconstrained gradient of the achievable rate is \cite{PalomarTIT06}
\begin{equation}
    \J =  \tilde{\F}^H \E(\tilde{\Thetab})^{-1} \H_{eq}(\tilde{\Thetab}) \tilde{\G}, 
    \label{eq:unconstrained2}
\end{equation}
which is an $r \times r$ matrix. Projecting $\J$ onto the tangent plane we get the $r \times r$ matrix $\R$, with eigendecomposition $j\R = \V_R \diag(j\theta_i, \ldots, j\theta_r) \V_R^T$. Notice that in the PO procedure, only $r$ phases need to be optimized. Assuming the other phases are fixed, the optimal $\theta_m$ that maximizes the achievable rate is \cite[Sec. III.B]{ZhangCapacityJSAC2020}
\begin{equation}
    \theta_m' = \angle \tilde{\f}_m^H {\bf A}^{-1}{\bf S}^H \tilde{\g}_m,
    \label{eq:phaseopt2}
\end{equation}
where ${\bf S}$ is defined in \eqref{eq:Heqphase} with $\tilde{\F}$ and $\tilde{\G}$ substituting $\F$ and $\G$, respectively; and $\A = \I_{N_r} + {\bf S}{\bf S}^H + \tilde{\f}_m \tilde{\f}_m^H \|\tilde{\g}_m \|_2^2 $.

We consider the same scenario as in the previous application, including the possibility that the direct channel is blocked. To do this, the path loss exponent of the direct channel is set to $\alpha = 8$. We compare the performance of the proposed low-rank MO algorithm with PO (labeled as {\bf MO+PO in $\CU_s(r)$}) with the following methods:
\begin{itemize}
     \item {\bf MM+MO in $\CU(M)$}: This method \cite{SantamariaSPAWC24} applies a mino\-rize-maximi\-zation (MM) algorithm and exploits the Takagi factorization of the BD-RIS. 
    \item {\bf MO in $\CU(M)$ + Proj}: it applies an MO algorithm on $\CU(M)$ to find a unitary but non-symmetric BD-RIS, $\Thetab_u$, that maximizes the rate. The final BD-RIS is obtained as the retraction (see Proposition \ref{prop:retraction}) $\Pi(\Thetab_u + \Thetab_u^T)$.
    \item {\bf Low Cost}: The BD-RIS is obtained as $\Pi(\F^H\H_d\G + (\F^H\H_d\G)^T)$, as proposed in \cite{MaoCL2024}. It assumes that the direct link is not blocked.
\end{itemize}

\begin{figure}[htb]
    \centering
\includegraphics[width=.47\textwidth]{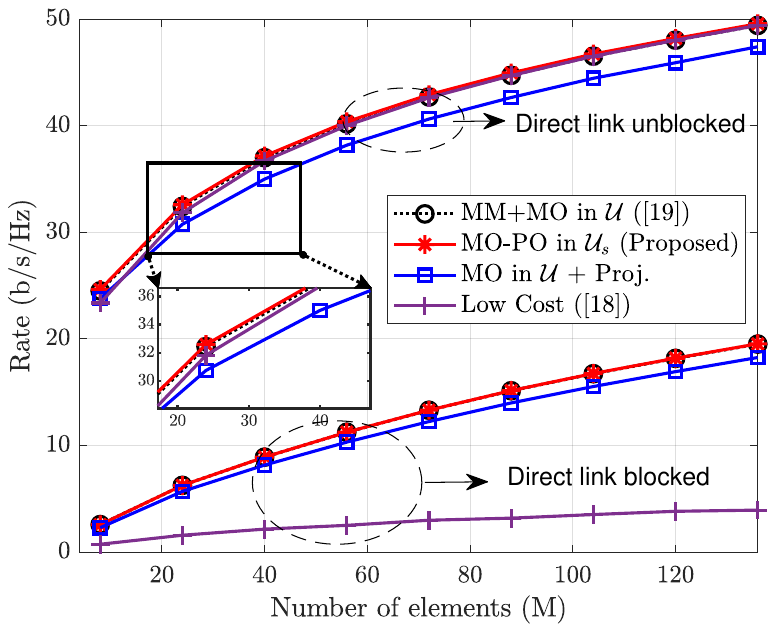}
     \caption{Rate vs. number of BD-RIS elements achieved with different algorithms.}
	\label{fig:RatevsM}
\end{figure}

Fig. \ref{fig:RatevsM} shows the rate versus the number of BD-RIS elements, $M$, when the direct channel is blocked (lower curves) and when there is a direct channel (upper curves). The proposed MO algorithm provides practically identical results to \cite{SantamariaSPAWC24} in both cases, although with much less computational effort. The results are also very similar to those provided by \cite{MaoCL2024} when the direct channel is unblocked, but when the direct channel is blocked, this method is not useful. Finally, the final retraction step after optimizing on $\CU(M)$ causes a significant rate penalty.

\subsection{MSE minimization}
As a final application demonstrating the generality of the proposed MO framework, we consider minimizing the mean squared error (MSE) between the transmitted and estimated signals at the Rx. In the RIS design literature, MSE minimization has been considered less frequently than sum-channel-gain or rate-maximization problems. RIS minimization with diagonal RIS has been considered in \cite{LongTWC24}, but to the best of our knowledge, it has not been used to optimize BD-RIS. As shown in \cite{PalomarFnT}, when the linear minimum MSE filter is applied at the Rx, the MSE matrix is precisely \eqref{eq:MSEmatrix}. Therefore, the optimization problem is
\begin{equation}
\label{eq:MinMSEMIMO}
\,\min_{\tilde{\Thetab} \in \CU_s(r)}\,\, \tr \left( \E(\tilde{\Thetab})^{-1} \right), 
\end{equation}
where $\E(\tilde{\Thetab})$ is the MSE matrix in \eqref{eq:MSEmatrix}. Assuming $M>r = N_t+N_r$, \eqref{eq:MinMSEMIMO} poses the low-rank optimization problem with $\H_{eq}(\tilde{\Thetab}) = \H_d + \tilde{\F} \tilde{\Thetab} \tilde{\G}^H$. To specialize our MO algorithm to MSE minimization, we only need the unconstrained gradient of \eqref{eq:MinMSEMIMO}, which, disregarding scaling constants, is
\begin{equation}
    \J =  \tilde{\F}^H \E(\tilde{\Thetab})^{-2} \H_{eq}(\tilde{\Thetab}) \tilde{\G}, 
    \label{eq:unconstrained3}
\end{equation}
where we see that the only difference with respect to the rate gradient in \eqref{eq:unconstrained2} is that the inverse of $\E(\tilde{\Thetab})$ is now squared. To avoid unnecessary repetition, we do not detail the calculation of the Riemannian gradient, the $\R$ matrix, and its eigendecomposition. Unfortunately, for the MSE minimization, there seems to be no closed-form solution for optimizing the phases in the MO-PO version. Therefore, for this optimization step, we use a standard golden-section search method.

Fig. \ref{fig:MSEvsM} compares the MSE (in log scale) obtained in a $2 \times 2$ MIMO system assisted by BD-RIS with an increasing number of elements $M$. We consider only the low-rank version with phase optimization. For comparison, we include in Fig. \ref{fig:MSEvsM} the MSE of the max-rate solution, as described in subsection \ref{sec:rateopt}. The min-MSE BD-RIS reduces the MSE by slightly more than 3 dB compared to the max-rate BD-RIS.  

\begin{figure}[htb]
    \centering
\includegraphics[width=.47\textwidth]{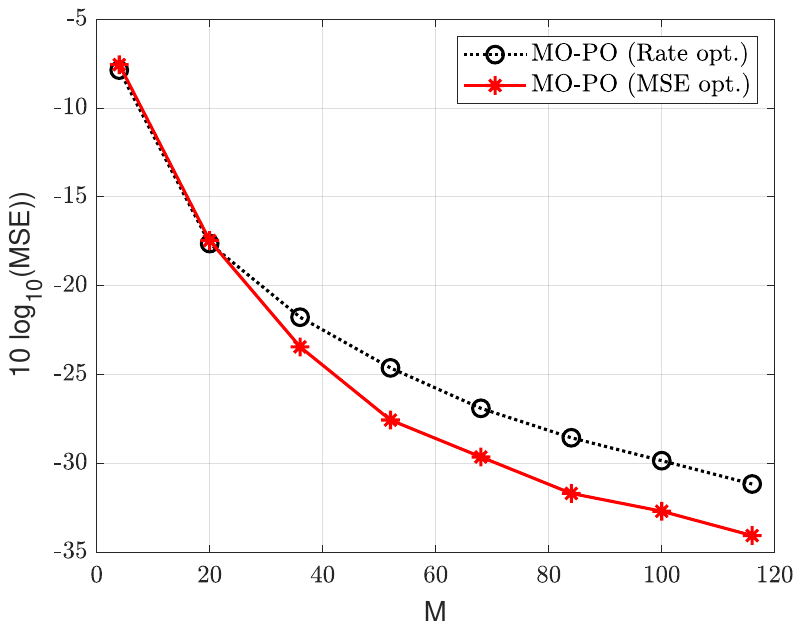}
     \caption{MSE vs. number of BD-RIS elements for MO algorithms optimizing rate or MSE.}
	\label{fig:MSEvsM}
\end{figure}

\label{sec:MSEopt}
\section{Conclusion}
\label{sec:conclusion}
In conclusion, the manifold of unitary and symmetric matrices has been characterized through the parameterization of its tangent space and geodesics, which has enabled the development of new Riemannian optimization algorithms for this manifold. Two variants stand out: one based on optimizing the adaptation step size via line search (LS), and the other, named phase optimization (PO) and based on the sequential optimization of the phases arising from the parameterization of the geodesic. Of particular interest is the PO version, which, although it does not have the convergence guarantees of the more classical LS version, in practice it seems to be computationally more efficient, especially in problems where closed-form formulas exist for phase optimization. As an application that motivates our study, we have considered the design of a BD-RIS in a MIMO channel to: i) maximize the Frobenius norm of the equivalent channel, ii) maximize the achievable data rate, or iii) minimize the MSE. For all cost functions, the MO-PO algorithm, particularly its low-rank version, stands out due to its fast convergence and optimal performance. In future work, we will consider the extension of the proposed framework to multiuser MIMO scenarios assisted by BD-RIS, as well as the direct optimization of the susceptance matrix (the imaginary part of the admittance matrix), which is related to the scattering BD-RIS matrix via the Cayley transform.


\section*{Appendix A: Proof of Proposition 1}
\label{AppendixA}
\noindent{\bf Two vector spaces:}
 Let  $\U\in\CU_s$ have Takagi decomposition $\U=\Q\Q^T$. Consider the following vector spaces:
\begin{align*}
	\mathcal S_1(\U) =& \{\B\in  \mathbb{C}^{n\times n}: \U^H\B+\B^H\U={\bf 0},\;\B=\B^T\},\\
    \mathcal S_2(\U)=& \{j\Q\R\Q^T:\R\in \mathbb{R}^{n\times n}, \R=\R^T \}.	
\end{align*}
We prove that they are indeed equal:
\begin{itemize}
    \item $\mathcal S_1(\U)\subseteq \mathcal S_2(\U)$: let $\B\in\mathcal S_1(\U)$ and consider $\R=-j\Q^H\B\Q^*$. Then, $\R^T=(-j\Q^H\B\Q^*)^T=(-j\Q^H\B^T\Q^*)=(-j\Q^H\B\Q^*)=\R$, hence $\R=\R^T$ is a symmetric matrix. Moreover, $\U=\Q\Q^T$ implies $\Q^*=\U^H\Q, \Q^H\U=\Q^T$ and hence:
\begin{eqnarray*}
    {\bf 0}& = & \U^H(-j\Q\R\Q^T)+(-j\Q\R\Q^T)^H\U \\
    &=& -j\Q^*\R\Q^T+j\Q^*\R^*\Q^T,
\end{eqnarray*}

that is to say, $\Q^*\R\Q^T=\Q^*\R^*\Q^T$, that readily implies $\R=\R^*$, that is $\R$ is a real matrix and hence we conclude that $\B=j\Q\R\Q^T\in\mathcal S_2(\U)$.
\item $\mathcal S_2(\U)\subseteq \mathcal S_1(\U)$: Let $\R$ be real and symmetric. Let $\B=j\Q\R\Q^T$. Then, $\B=\B^T$ and
\begin{eqnarray*}
\U^H\B+\B^H\U &=&j\U^H\Q\R\Q^T-j\Q^*\R\Q^H\U \\
&=&j\Q^*\R\Q^T-j\Q^*\R\Q^T=0,
\end{eqnarray*}

that is $\B\in\mathcal S_1(\U)$, and we are done.
\end{itemize}
The real dimension of both spaces is then equal, and easy to compute for the real dimension of $\mathcal S_2$ equals that of the set of real $n\times n$ symmetric matrices, that is, $n(n+1)/2$.

\mbox{ }\\

\noindent{\bf Restriction of the exponential map in $\CU$}: The exponential map $\exp_\U:\{\B\in\mathbb{C}^{n\times n}:\U^H\B+\B^H\U=0,\|\B\|_F<1/2\}\to\CU$ defines a diffeomorphism onto an open neighborhood of $\U$ in $\CU$ (see for example \cite{HallLieGroups}). We claim that, if additionally $\U\in\CU_s$, the image of the restriction
\[
\varphi_\U=\exp_{\U}\mid_{\{\B\in\mathcal S_1(\U),\|\B\|_F<1/2\}}
\]
is contained in $\CU_s$ and contains an open neighborhood of $\U$ in $\CU_s$:
\begin{itemize}
    \item First, we prove that the image of the restriction is contained in $\CU_s$. Let $\B\in\mathcal{S}_1=\mathcal{S}_2$, hence of the form $\B=j\Q\R\Q^T$ with $\R$ real and symmetric. Then,
\begin{eqnarray*}
   \exp_\U(\B)&=&\U e^{\U^H\B}=\U e^{\U^Hj\Q\R\Q^T} \\
   &=&\U e^{j\Q^*\R\Q^T}=\U\Q^* e^{j\R}\Q^T \\
   &=&\Q e^{j\R}\Q^T 
\end{eqnarray*}
is symmetric, since $\R$ is symmetric. Since it is also unitary, we conclude that
$\exp_\U(\B)\in\CU_s$.
\item The image of $\varphi_\U$ contains an open subset of $\CU_s$ around $\U$: let $\hat \U\in \mathcal U_s$ be sufficiently close to $\hat \U$. Since $\exp$ is a diffeomorphism to the space of unitary matrices, we have $\hat \U=\U\exp(\U^H\B)$ for some $\B$, $\U^H\B+\B^H\U={\bf 0}$, $\|\B\|_F<1/2$. We have:
\begin{eqnarray*}
e^{\B\U^H}&=&\U e^{\U^H\B}\U^H=\hat\U\U^H \\
&=&\hat \U^T\U^H=e^{(\U^H\B)^T}\U\U^H=e^{-(\B^H\U)^T} \\
&=&e^{-\U\B^*}.    
\end{eqnarray*}
Since $\|\B\|_F<1/2$ and the exponential is a diffeomorphism in this range, we conclude that $\B\U^*+\U\B^*=\B\U^H+\U\B^*=0$, which transposing implies
\[
\U^H\B^T=-\B^H\U^T=-\B^H\U=\U^H\B,
\]
which implies that $\B$ is symmetric. We have proved that the image of $\varphi_\U$ covers an open subset of $\mathcal U_s$.
\end{itemize}

\mbox{ }\\

\noindent{\bf $\mathcal U_s$ is a manifold}: The previous point shows that $\mathcal U_s$ satisfies the submanifold property: at every point $\U\in\CU_s$, there exists a coordinate system (given by the matrix exponential) of $\CU$ such that the image intersected with $\CU_s$ equals the image of a $n(n+1)/2$ real dimensional subspace $\mathcal S_1(\U)=\mathcal S_2(\U)$. The tangent space to $\CU_s$ at $\U$ is then precisely $\mathcal S_1(\U)$ and the exponential map in $\CU_s$ is just the restriction of the exponential map in $\CU$ to $\T_\U\CU_s$.

\section*{Appendix B: Proof of Theorem \ref{th:theoremUs}}
\label{AppendixB}
Since $\Thetab$ is symmetric and unitary, it admits a Takagi decomposition of the form $\Thetab = \Q\Q^T$, with $\Q$ unitary. Therefore, the left and right subspaces of the low-rank matrix must be matched to both $\F^H$ and $\G^T$. More precisely, let us define the $M \times r$ matrix $\Z = \left [ \F^H | \G^T \right ]$ and perform its SVD \cite[App. C]{Coherence} as
\[
\Z = \begin{bmatrix} \U_Z & \U_Z^{\perp} \end{bmatrix} \begin{bmatrix} \Sigmab_Z & {\bf 0} \\ 
 {\bf 0} & {\bf 0} \end{bmatrix} \begin{bmatrix} \V_Z^H \\ (\V_Z^{\perp})^H \end{bmatrix}. 
\]
Then, ${\bf P}_Z = \U_Z \U_Z^H$ is the orthogonal projection matrix onto the column space of $\Z = \left [ \F^H | \G^T \right ]$. It is a rank at most $r$, with $r = N_t+N_r$, idempotent matrix that satisfies $\F{\bf P}_Z = \F$,  ${\bf P}_Z^*= {\bf P}_Z^T$, and ${\bf P}_Z^T\G^H = \G^H$. Then, the matrix $\Thetab_{lr} = {\bf P}_Z \Thetab {\bf P}_Z^T$ is a {\it symmetric} rank at most $r$ matrix that satisfies 
\[
\F \Thetab_{lr} \G^H = \F {\bf P}_Z \Thetab {\bf P}_Z^T \G^H = \F \Thetab \G^H.
\]
To show that $\I_M - \Thetab_{lr} \Thetab_{lr}^H$ is positive semidefinite, it suffices to prove that the operator norm of $\Thetab_{lr}$ (with respect to the Euclidean norm) is at most 1, i.e., $\|\Thetab_{lr} \x\| \leq \|\x\|$ for all $ \x \in \mathbb{C}^M$. We have
\begin{equation*}
     \|\Thetab_{lr} \x\| = \|{\bf P}_Z \Thetab ({\bf P}_Z^T \x)\| \stackrel{(a)}{\leq} \|\Thetab ({\bf P}_Z^T \x)\|  \stackrel{(b)}{=} \|({\bf P}_Z^T \x)\|  \stackrel{(c)}{\leq} \| \x\|, \label{eq:proofnorm}
\end{equation*}
where $(a)$ follows since $\|{\bf P}_Z \| \leq 1$,  $(b)$ since $\Thetab\in\CU$, and $(c)$ since $\|{\bf P}_Z^T \| \leq 1$. Thus, the operator norm of $\Thetab_{lr}$ is at most 1 and $\I_M - \Thetab_{lr} \Thetab_{lr}^H$ is positive semidefinite.
\section*{Appendix C: Proof of Proposition \ref{prop:frobnorm}}
\label{AppendixC}
Let us consider the following problem
\begin{equation*}
({\cal{P}}_{2a}): \quad \max_{\tilde{\Thetab} \in {\cal{B}}_s(r)} \|\tilde{\F} \tilde{\Thetab} {\tilde \G}^H\|_F^2,
\end{equation*}  
and denote its optimal solution as $\tilde{\Thetab}_{opt}$. We want to prove that one can choose $\tilde{\Thetab}_{opt} \in \CU_s(r)$. The solution $\tilde{\Thetab}_{opt}$ is a symmetric matrix and hence has a Takagi factorization $\tilde{\Thetab}_{opt}=\Q{\bf\Sigma}\Q^T$ with $\Q$ unitary and $\Sigmab= \diag(s_1,\ldots,s_r)$, $s_m\in[0,1]$ for all $m$. Let us assume that for some $m$ we have $s_m<1$ and, consequently, $\tilde{\Thetab}_{opt} \notin \CU_s(r)$. Then, consider the perturbation of $\tilde{\Thetab}_{opt}$ given by
     \begin{equation*}\label{eq:perturbation}
         \tilde\Thetab_{pert}=\tilde\Thetab_{opt}  + se^{i\theta}\Q{\bf e}_m{\bf e}_m^T\Q^T=\Q(\Sigmab+se^{i\theta}{\bf e}_m{\bf e}_m^T)\Q^T,
     \end{equation*}
     with ${\bf e}_m$ the $m$--th vector of the standard basis in $\mathbb{C}^r$, $s\in\mathbb R$ and $\theta\in[0,2\pi)$ an angle to be fixed later. Then, we have
    {\small{
     \begin{equation*}
    \left. \frac{d \|\tilde{\F} \tilde\Thetab_{pert} {\tilde \G}^H\|_F^2}{ds} \right |_{s=0} = 2\Re\left(e^{-i\theta}{\bf e}_m^T\Q^H\tilde{\F}^H\tilde{\F} \tilde\Thetab {\tilde \G}^H\tilde\G\Q^*{\bf e}_m\right).
     \end{equation*}}}
If ${\bf e}_m^T\Q^H\tilde{\F}^H\tilde{\F} \tilde\Thetab_{opt} {\tilde \G}^H\tilde\G\Q^*{\bf e}_m\neq0$, choosing $\theta$ such that the real part in the right--hand side is positive and letting $s>0$ be small enough, we see that $\tilde\Thetab_{pert}\in {\cal{B}}_s(r)$ satisfies  $\|\tilde{\F} \tilde{\Thetab}_{pert} {\tilde \G}^H\|_F^2>\|\tilde{\F} \tilde{\Thetab}_{opt} {\tilde \G}^H\|_F^2$. This contradicts that $\tilde\Thetab_{opt}$ is optimal, hence we conclude that ${\bf e}_m^T\Q^H\tilde{\F}^H\tilde{\F}  \tilde\Thetab_{opt} {\tilde \G}^H\tilde\G\Q^*{\bf e}_m=0$. Then, we have
         \[
          \|\tilde{\F} \tilde\Thetab_{pert} {\tilde \G}^H\|_F^2=\|\tilde{\F} \tilde\Thetab_{opt} {\tilde \G}^H\|_F^2+s^2\|\tilde{\F}\Q{\bf e}_m{\bf e}_m^T\Q^T\tilde{\G}^H\|_F^2.
         \]
         As before, if $\|\tilde{\F}\Q{\bf e}_m{\bf e}_m^T\Q^T\tilde{\G}^H\|_F^2\neq0$ we get to a contradiction, hence we conclude that $ \|\tilde{\F} \tilde\Thetab_{pert} {\tilde \G}^H\|_F^2=\|\tilde{\F} \tilde\Thetab_{opt} {\tilde \G}^H\|_F^2$ for all $s$ and in particular we can choose $s=1-s_m$. This shows that a maximizer in $\CU_s(r)$ exists. In conclusion, Problem ${\cal{P}}_{2a}$ is equivalent to
\begin{equation*}
     ({\cal{P}}_{3a}): \quad \max_{\tilde{\Thetab} \in \CU_s(r)} \|\tilde{\F} \tilde{\Thetab} {\tilde \G}^H\|_F^2,
\end{equation*}
as claimed.

\bibliographystyle{IEEEtran}
\bibliography{refs}

@book{Coherence,
  title={Coherence: In Signal Processing and Machine Learning},
  author={Ramírez, D. and Santamaría, I. and Scharf, L.},
  year={2023},
  publisher={Springer Nature}
}

@article{SantamariaOJVT25,
  title={Interference Minimization in Beyond-Diagonal {RIS}-Assisted {MIMO} Interference Channels},
  author={Santamaria, I. and Soleymani, M. and Jorswieck, E. and Gutierrez,  J.},
  journal={IEEE Open Journal of Vehicular Technology},
  year={2025},
  volume={6},
  pages={1005-1017},
  publisher={IEEE}
}

@inproceedings{SantamariaSPAWC24,
  title={{MIMO} Capacity Maximization with Beyond-Diagonal {RIS}},
  author={Santamaria, I. and Soleymani, M. and Jorswieck, E. and Guti\'errez, J.},
  booktitle={IEEE International Workshop on Signal Processing Advances in Wireless Communications  (SPAWC)},
address = {Lucca, Italy},
pages={936--940},
 doi       = {10.1109/SPAWC60668.2024.10694491},
  year={2024}
}

@article{Autonne,
  title={Sur les matrices hypohermitiennes et sur les matrices unitaires},
  author={L. Autonne},
  journal={Ann. Univ. Lyon},
  volume={38},
  pages={1--77},
  year={1915}
}

@article{Takagi,
  title={On an algebraic problem related to an analytic theorem of {C}arathéodory and {F}ejér and on an allied theorem of {L}andau},
  author={T. Takagi},
  journal={Jpn. J. Math. Trans. Abstr.},
  volume={1},
  pages={83-93},
  year={1924}
}

@Book{boumal2023intromanifolds,
  title     = {An Introduction to Optimization on Smooth Manifolds},
  author    = {Boumal, N.},
  publisher = {Cambridge University Press},
  year      = {2023},
  url       = {https://www.nicolasboumal.net/book},
  doi       = {10.1017/9781009166164}
}

@article{SantamariaSPLetters2023,
  title={{SNR} Maximization in Beyond Diagonal {RIS}-assisted Single and Multiple Antenna Links},
  author={Santamaria, I. and Soleymani,  M. and  Jorswieck, E. and Guti\'errez, J.},
  journal={IEEE Signal Proc. Letters},
  volume={30},
  pages={923--926},
  year={2023},
  publisher={IEEE}
}

@article{edelman1998geometry,
  title={The geometry of algorithms with orthogonality constraints},
  author={Edelman, A. and Arias, T. A. and Smith, S. T.},
  journal={SIAM Journal on Matrix Analysis and Applications},
  volume={20},
  number={2},
  pages={303--353},
  year={1998},
  publisher={SIAM}
}

@article{MaoCL2024,
  title={A low-complexity beamforming design for beyond-diagonal {RIS} aided multi-user networks},
  author={Fang, T. and Mao, Y.},
  journal={IEEE Comm. Letters},
  volume={28},
  number={1},
  pages={203--207},
  year={2024},
  publisher={IEEE}
}

@inproceedings{EmilEuCap2025,
  title={Capacity maximization for {MIMO} channels assisted by beyond-diagonal {RIS}},
  author={Bj{\"o}rnson, E. and Demir, T.},
  booktitle={19th European Conference on Antennas and Propagation},
address = {Stockholm, Sweden},
  year={2025}
}

@article{ZhangCapacityJSAC2020,
  title={Capacity characterization for intelligent reflecting
surface aided {MIMO} communication},
  author={Zhang, S. and Zhang, R. },
  journal={IEEE J. Sel. Areas Commun.},
  volume={38},
  number={8},
  pages={1823--1838},
  year={2020},
  publisher={IEEE}
}

@article{NeriniTWC2023,
author = {Nerini, M. and Shen, S. and Clerckx, B.},
journal = {IEEE Trans. Wireless Commun.},
title = {Closed-form global optimization of beyond diagonal reconfigurable intelligent surfaces},
volume={23},
  number={2},
  pages={1037--1051},
  year={2024}
}

@article{PalomarTIT06,
  title={Gradient of mutual information in linear vector {G}aussian channels},
  author={Palomar, D. P. and Verdú, S. },
  journal={IEEE Trans. Inf. Theory},
  volume={52},
  number={1},
  pages={141--154},
  year={2006},
  publisher={IEEE}
}

@article{ClerckxTWC22a,
  title={Modeling and architecture design of reconfigurable intelligent surfaces using scattering parameter network analysis},
  author={Shen, S. and Clerckx, B. and Murch, R.},
  journal={IEEE Trans. Wireless Commun.},
  volume={21},
  pages={1229--1243},
  year={2022},
  publisher={IEEE}
}

@article{ClerckxTWC22b,
  title={Beyond diagonal reconfigurable intelligent surfaces: From Transmitting and reflecting modes to single-, group-, and fully-connected architectures},
  author={Li, H. and Shen, S. and Clerckx, B.},
  journal={IEEE Trans. Wireless Commun.},
  volume={22},
  number={4},
  pages={2311--2324},
  year={2022},
  publisher={IEEE}
}

@Book{HallLieGroups,
  title     = {Lie Groups, Lie Algebras, and Representations: An Elementary Introduction},
  author    = {Hall, B. C.},
  publisher = {Springer},
  year      = {2015}
}

@article{Xia25,
  title={Statistical {CSI}-Enabled Optimization for Beyond Diagonal {SIM}},
  author={Xia, Q. and Zhang, J. and Xu, K. and Ma, S. and Jin, S. and Yuen, C.},
  journal={IEEE Wireless Comm. Letters},
  volume={14},
  number={9},
  pages={2972--2976},
 doi= {10.1109/LWC.2025.3584817},
  year={2025},
  publisher={IEEE}
}

@book{AbsilBook,
	author = {P. A. Absil and R. Mahony and R. Sepulchre},
	publisher = {Princeton University Press},
	title = {Optimization Algorithms on Matrix Manifolds},
	year = {2008}}

@article{LongTWC24,
  title={{MMSE} Design of {RIS}-Aided Communications With Spatially-Correlated Channels and Electromagnetic Interference},
  author={Long, W. X. and Moretti, M. and Abrardo, A. and Sanguinetti, L. and Chen, R.},
  journal={IEEE Trans. Wireless Commun.},
  volume={23},
  number={11},
  pages={16992--17006},
  year={2024},
  publisher={IEEE}
}

@article{SunTSP24,
  title={Precoder Design for Massive {MIMO} Downlink With Matrix Manifold Optimization},
  author={Sun, R. and Wang, C. and Lu, A. and Gao, X. and Xia, X.-G.},
  journal={IEEE Trans. Signal Process.},
  volume={72},
  pages={1065--1080},
  year={2024},
  publisher={IEEE}
}

@inproceedings{ProjUNN,
author = {Kiani, B. T. and Balestriero, R. and LeCun, Y. and Lloyd, S.},
title = {proj{UNN}: efficient method for training deep networks with unitary matrices},
year = {2022},
isbn = {9781713871088},
address = {Red Hook, NY, USA},
booktitle = {Proceedings of the 36th International Conference on Neural Information Processing Systems},
articleno = {1050},
numpages = {16},
location = {New Orleans, LA, USA},
series = {NIPS '22}
}

@article{AbrudanTSP08,
  title={Steepest Descent Algorithms for Optimization Under Unitary Matrix Constraint},
  author={Abrudan, T. E. and Eriksson, J. and Koivunen, V.},
  journal={IEEE Trans. Signal Process.},
  volume={56}, 
  number={3},
  pages={1134--1147},
  year={2008},
  publisher={IEEE}
}

@inproceedings{santamariaICASSP26,
  title={Riemannian optimization on the manifold of unitary and symmetric matrices with application to {BD-RIS}-assisted systems},
  author={Santamaria, I. and Soleymani, M. and Jorswieck, E. and Gutierrez, J. and Beltran, C.},
  booktitle={IEEE International Conference on Acoustics, Speech and Signal Processing (ICASSP)},
  year={2026},
  address = {Barcelona, Spain}
}

@article{Zhou2025Arxiv,
  title={Generalized beyond-diagonal {RIS} architectures: Theory and design via structured-oriented symmetric unitary projection},
  author={Zhou, X. and Fang, T. and Mao, Y. and Clerckx, B.},
  journal={arXiv: https://arxiv.org/pdf/2509.17804v2},
  year={2025}
}

@article{WuTIT25,
  title={Beyond-Diagonal {RIS} in Multiuser {MIMO}: Graph Theoretic Modeling and Optimal Architectures With Low Complexity},
  author={Wu, Z. and Clerckx, B.},
  journal={IEEE Trans. on Inf. Theory},
  volume={71}, 
  number={11},
  pages={8506-8523},
  year={2025},
  publisher={IEEE}
}

@inproceedings{ICC_QStem,
author = {Zhou, X. and Fang, T. and Mao, Y.},
title = {A Novel {Q}-Stem Connected Architecture for Beyond-Diagonal Reconfigurable Intelligent Surfaces},
year = {2025},
booktitle = {IEEE International Conference on Communications (ICC)},
pages={6880-6885},
location = {Montreal, QC, Canada}
}

@article{ClerkxBDRISTut26,
    author = {Li, H. and Nerini, M. and Shen, S. and Clerckx, B.},
    title = {A Tutorial on Beyond-Diagonal Reconfigurable Intelligent Surfaces: Modeling, Architectures, System Design and Optimization, and Applications},
    journal = {IEEE Communications Surveys \& Tutorials},
    volume = {28},
    pages = {4086-4126},
    year = {2026}
}

@article{Zhou24,
    author = {Zhou, Y. and Liu, Y. and Li, H. and Wu, Q. and Shen, S. and Clerckx, B.},
    title = {Optimizing Power Consumption, Energy Efficiency, and Sum-Rate Using Beyond Diagonal {RIS}—A Unified Approach},
    journal = {IEEE Trans. on Wireless Communications},
    volume = {23},
    number={7},
    pages = {7423-7438},
    year = {2024}
}

@article{FanHoffman,
 ISSN = {00029939, 10886826},
 URL = {http://www.jstor.org/stable/2032662},
 author = {Ky Fan and A. J. Hoffman},
 journal = {Proceedings of the American Mathematical Society},
 number = {1},
 pages = {111--116},
 publisher = {American Mathematical Society},
 title = {Some Metric Inequalities in the Space of Matrices},
 urldate = {2026-01-26},
 volume = {6},
 year = {1955}
}

@article{Hassibi02,
author = {Hassibi, B. and Hochwald, B. M.},
title = {Cayley differential unitary space-time codes},
journal = {IEEE Trans. on Information Theory},
volume = {48},
number = {6},
month = {Jun.},
pages = {1485-1503},
year = {2002}
}

@article{Kovacevic06,
author = {Zhou, J. and Do, M. N.  and Kovacevic, J.},
title = {Special paraunitary matrices, Cayley transform, and multidimensional orthogonal filter banks},
journal = {IEEE Trans. on Image Processing},
volume = {15},
number = {2},
month = {Feb.},
pages = {511-519},
year = {2006}
}

@article{Zhao25,
author = {Zhao, Y. and Li, H. and Clerckx, B. and Franceschetti, M.},
title = {{MIMO} Channel Shaping and Rate Maximization Using Beyond-Diagonal {RIS}},
journal = {IEEE Trans. on Signal Processing},
volume = {73},
pages = {4397-4414},
year = {2025}
}

@book{Postikov,
author = {Postnikov, M.},
title = {Lie groups and Lie algebras},
year = {1994},
publisher = {Lectures in Geometry 5},
address = {MIR Moscow}
}

@article{Cayley46,
author = {Cayley, A.},
title = {Sur quelques propriétés des déterminants gauches},
journal = {J. Reine Angew. Math.},
volume = {32},
pages = {119-123},
year = {1846}
}

@article{Nossek24,
author = {Nossek, J. A. and  Semmler, D.  and Joham, M.  and Utschick, W. },
title = {Physically Consistent Modeling of Wireless Links With Reconfigurable Intelligent Surfaces Using Multiport Network Analysis},
journal = {IEEE Wireless Communications Letters},
volume = {13},
number = {8},
month = {Aug.},
pages = {2240-2244},
year = {2024}
}

@article{Nerini24,
author = {Nerini, M. and Shen, S. and Li, H. and Clerckx, B. },
title = {Beyond Diagonal Reconfigurable Intelligent Surfaces Utilizing Graph Theory: Modeling, Architecture Design, and Optimization},
journal = {IEEE Trans. on Wireless Commun.},
volume = {23},
number = {8},
month = {Aug.},
pages = {9972-9985},
year = {2024}
}

@article{ABRUDAN20091704,
title = {Conjugate gradient algorithm for optimization under unitary matrix constraint},
journal = {Signal Processing},
volume = {89},
number = {9},
pages = {1704-1714},
year = {2009},
issn = {0165-1684},
doi = {https://doi.org/10.1016/j.sigpro.2009.03.015},
url = {https://www.sciencedirect.com/science/article/pii/S0165168409000814},
author = {Traian Abrudan and Jan Eriksson and Visa Koivunen}
}

@article{MathisSIAM69,
 ISSN = {00361445},
 URL = {http://www.jstor.org/stable/2028117},
 author = {Robert F. Mathis},
 journal = {SIAM Review},
 number = {2},
 pages = {261--263},
 publisher = {Society for Industrial and Applied Mathematics},
 title = {Short Notes: Completion of a Symmetric Unitary Matrix},
 urldate = {2026-03-20},
 volume = {11},
 year = {1969}
}

@article{Dyson62,
 author = {Dyson, F. J.},
 journal = {J. Math. Phys.},
 number = {6},
 title = {The Threefold Way. Algebraic Structure of Symmetry Groups and Ensembles in Quantum Mechanics},
 pages = {1199--1215},
 volume = {3},
 year = {1962}
}

@book{Pozar11,
    author = {Pozar, D. M.},
    title = {Microwave Engineering, 4th Ed.},
    publisher = {John Wiley and Sons},
    year = {2011}
}

@article{PalomarFnT,
author = {Palomar, D. P. and Jiang, Y.},
title = {{MIMO} Transceiver Design via Majorization Theory},
year = {2006},
issue_date = {November 2006},
publisher = {Now Publishers Inc.},
volume = {3},
number = {4},
issn = {1567-2190},
journal = {Found. Trends Commun. Inf. Theory}
}

@article{Junior24,
 author = {Junior, W. D. S. and Guerra, D. W. M. and Filho, J. C. M. and Abrão, T.  and Hossain, E.},
 journal = {IEEE Open Journal of the Communications Society},
 title = {Manifold-Based Optimizations for {RIS}-Aided Massive {MIMO} Systems},
 pages = {7913-7940},
 volume = {5},
 year = {2024}
}

@ARTICLE{Mossallamy21,
  author={ElMossallamy, Mohamed A. and Seddik, Karim G. and Chen, Wei and Wang, Li and Li, Geoffery Ye and Han, Zhu},
  journal={IEEE Trans. on Veh. Tech.}, 
  title={{RIS} Optimization on the Complex Circle Manifold for Interference Mitigation in Interference Channels}, 
  year={2021},
  volume={70},
  number={6},
  pages={6184-6189}}

@article{Shtaiwi23,
author = {Shtaiwi, Eyad and Zhang, Hongliang and Abdelhadi, Ahmed and Swindlehurst, A. and Han, Zhu and Poor, H. Vincent},
year = {2023},
pages = {4909-4923},
title = {Sum-Rate Maximization for {RIS}-Assisted Integrated Sensing and Communication Systems With Manifold Optimization},
volume = {71},
number = [8],
journal = {IEEE Trans. on Communications}
}

@article{MarwaArxiv26,
  title={Beyond-Diagonal {RIS} For Enhanced Secrecy and Sensing Gains in Secure {ISAC} Networks: An Optimization Framework},
  author={Illi, Elmehdi and Qaraqe, Marwa},
  journal={arXiv: 10.48550/arXiv.2604.04480},
  year={2026}
}

@misc{fidanovski2026,
      title={Reciprocal Beyond-Diagonal Reconfigurable Intelligent Surface ({BD-RIS}): Scattering Matrix Design via Manifold Optimization}, 
      author={Marko Fidanovski and Iván Alexander Morales Sandoval and Hyeon Seok Rou and Giuseppe Thadeu Freitas de Abreu and Emil Björnson},
      year={2026},
      eprint={2509.20246},
      archivePrefix={arXiv},
      primaryClass={eess.SP},
      url={https://arxiv.org/abs/2509.20246}, 
}

@article{Tseng01,
author = {Tseng, P.},
year = {2001},
pages = {475-494},
title = {Convergence of a Block Coordinate Descent Method for Nondifferentiable Minimization},
volume = {109},
journal = {Journal of Optimization Theory and Applications}
}

@article{Luo13,
	Author = {M. Razaviyayn and M. Hong and Z. Q. Luo},
	Journal = {SIAM J. Opt.},
	Number = {2},
	Pages = {1126--1153},
	Title = {A Unified Convergence Analysis of Block Successive Minimization Methods for Nonsmooth Optimization},
	Volume = {23},
	Year = {2013}
}

@book{Bertsekas_nonlinear,
	author = {D. P. Bertsekas},
	edition = {2nd},
	publisher = {Athena Scientific},
	title = {Nonlinear Programming},
	year = {1999}}

@article{MaxDet,
  title={Optimal symmetric low-rank {BD-RIS} configuration maximizing the determinant of a {MIMO} link},
  author={Santamaria, I. and Soleymani, M. and Guti\'errez, J. and Jorswieck, E.},
  journal={arXiv: https://doi.org/10.48550/arXiv.2604.09335},
  year={2026}
}
\balance
\end{document}